\theoremstyle{plain}
\newtheorem{theorem}{Theorem}
\newtheorem{lemma}[theorem]{Lemma}
\theoremstyle{definition}
\theoremstyle{remark}
\newtheorem{remark}[theorem]{Remark}
\def\leqslant{\le}
\def\bq{\begin{eqnarray}}
\def\eq{\end{eqnarray}}
\def\bqq{\begin{align*}}
\def\eqq{\end{align*}}
\def\nn{\nonumber}
\def\eps{\varepsilon}
\renewcommand{\epsilon}{\varepsilon}
\newcommand\1{{\ensuremath {\mathds 1} }}
\def\cF {\mathcal{F}}
\def\cN{\mathcal{N}}
\def\R {\mathbb{R}}
\def\C {\mathbb{C}}
\def\H{\gH}
\def\R {\mathbb{R}}
\def\C {\mathbb{C}}
\def\d{\,{\rm d}}
\newcommand{\gH}{\mathfrak{H}}
\newcommand{\bH}{\mathbb{H}}
\newcommand{\dGamma}{{\ensuremath{\rm d}\Gamma}}
\title[Bogoliubov correction to mean-field dynamics]{A note on the validity of Bogoliubov correction to mean-field dynamics}
\author[P.T. Nam]{Phan Th\`anh Nam}
\address{Institute of Science and Technology Austria, Am Campus 1, 3400 Klosterneuburg, Austria} 
\email{pnam@ist.ac.at}
\author[M. Napi\'orkowski]{Marcin Napi\'orkowski}
\address{Institute of Science and Technology Austria, Am Campus 1, 3400 Klosterneuburg, Austria} 
\email{mnapiork@ist.ac.at}
\begin{document}
\date{\today}

\begin{abstract} We study the norm approximation to the Schr\"odinger dynamics of $N$ bosons in $\R^3$ with an interaction potential of the form $N^{3\beta-1}w(N^{\beta}(x-y))$. Assuming that in the initial state the particles outside of the condensate form a quasi-free state with finite kinetic energy, we show that in the large $N$ limit, the fluctuations around the condensate can be effectively described using Bogoliubov approximation for all $0\le \beta<1/2$. The range of $\beta$ is expected to be optimal for this large class of initial states. 
\end{abstract}

\maketitle


\section{Introduction}

We are interested in the norm approximation of the Schr\"odinger evolution 
\bq \label{eq:schrodingerdynamics}
\Psi_N(t) = e^{-itH_N}\Psi_{N}(0)
\eq
on the bosonic Hilbert space $\H^N=\bigotimes_{\text{sym}}^N L^2(\R^3)$. Here 
\begin{equation*} 
H_N= \sum\limits_{j = 1}^N -\Delta_{x_j} + \frac{1}{N-1} \sum\limits_{1 \leqslant j < k \leqslant N} {w_N(x_j-x_k)}
\end{equation*}
is the Hamiltonian of a system of $N$ identical bosons in $\R^3$. The interaction potential is taken of the delta-type form 
\begin{equation*}  
w_N(x-y)= N^{3\beta} w(N^\beta (x-y)).
\end{equation*}
The parameter $\beta \ge 0$ is fixed and $w\in C^1_0(\R^3)$ is non-negative and spherically symmetric decreasing. 

From the physical point of view, the initial state $\Psi_N(0)$ may be interpreted as a ground state of a trapped system and the time evolution $\Psi_N(t)$ in \eqref{eq:schrodingerdynamics} is observed when the trapping potential is turned off. Thus, motivated by the results on ground states in  \cite{LewNamSerSol-15} (see also \cite{Seiringer-11,GreSei-13,DerNap-13,NamSei-15}), we expect that
\bq \label{eq:PsiN0-intro}
\Psi_N(0) \approx \sum_{n=0}^N u(0)^{\otimes (N-n)} \otimes_s \psi_n(0)
\eq
in norm for $N$ large. Here $u(0)$ is a normalized function in $L^2(\R^3)$ describing the Bose-Einstein condensate and $(\psi_n(0))_{n=0}^\infty$ is a {quasi-free} state describing the fluctuations around the condensate. 

We will show that if  $\Psi_N(0)$ satisfies \eqref{eq:PsiN0-intro}, then for every $t>0$, we have
\bq \label{eq:PsiNt-intro}
\Psi_N(t) \approx \sum_{n=0}^N u(t)^{\otimes (N-n)} \otimes_s \psi_n(t)
\eq
in norm for $N$ large. Here $u(t) \in L^2(\R^3)$ is determined by a mean-field (Hartree) equation and $(\psi_n(t))_{n=0}^\infty$ is a quasi-free state governed by a quadratic (Bogoliubov) Hamiltonian on Fock space. 

The approximation \eqref{eq:PsiNt-intro} has been first established in \cite{LewNamSch-15} for $\beta=0$, and then extended to $0\le \beta<1/3$ in \cite{NamNap-15}. This range of $\beta$ seems to be optimal if we only assume that $(\psi_n(0))_{n=0}^\infty$ in \eqref{eq:PsiN0-intro} is a quasi-free state. In the present work, we will make an additional assumption (still physically reasonable) that $(\psi_n(0))_{n=0}^\infty$ has finite kinetic energy, and prove \eqref{eq:PsiNt-intro} for all $0\le \beta<1/2$. Note that when $\beta>1/3$, the range of the interaction potential is much smaller than the average distance between the particles, and hence every particle essentially interacts only with itself. This so-called self-interaction regime is physically more relevant and mathematically more challenging than the mean-field regime $\beta <1/3$. 

An analogue of \eqref{eq:PsiNt-intro} related to the fluctuations around coherent states in Fock space has been justified in \cite{Hepp-74,GinVel-79, GinVel-79b,GriMacMar-10,GriMacMar-11} for $\beta=0$, in \cite{GriMac-13} for $\beta<1/3$ and in  \cite{Kuz-15b} for $\beta<1/2$.  In particular, our result is comparable to \cite{Kuz-15b}, but our method is different and it can be used to simplify the proof in \cite{Kuz-15b}. Thanks to a heuristic argument in \cite{Kuz-15b}, we also expect that the range $0\le \beta<1/2$ is optimal for the approximation \eqref{eq:PsiNt-intro} to hold, as soon as $u(t)$ is still decoupled from the equation for $(\psi_n(t))_{n=0}^\infty$.   

When $\beta>1/2$, the effective equations for $u(t)$ and $(\psi_n(t))_{n=0}^\infty$ in \eqref{eq:PsiNt-intro} have to be modified to take two-body scattering processes into account. This step has been carried out in the Fock space setting in \cite{BocCenSch-15,GriMac-15}, but it is still open in the $N$-particle setting. 

Note that the norm convergence \eqref{eq:PsiNt-intro} is much more precise than the usual convergence of density matrices in the context of the Bose-Einstein condensation. In particular, our result can be interpreted as a second order correction to the leading order result in \cite{ErdSchYau-07}. We refer to \cite{NamNap-15} for a further discussion and an extended list of literature in this direction.  

The precise statement of our result is given in the next section.

\medskip

\noindent{\bf Acknowledgment.}  The first author thanks Mathieu Lewin for helpful discussions. We thank David Mitrouskas and S\"oren Petrat for finding a gap in a previous version of this paper. We thank the referees for useful comments and remarks. The support of the Austrian Science Fund (FWF) project Nr. P 27533-N27 is gratefully acknowledged.

\section{Main result} \label{sec:main-result}

In our paper, the condensate is governed by the Hartree equation
\begin{equation} \label{eq:Hartree-equation}
\left\{
\begin{aligned}
i\partial_t u(t) &=  \big(-\Delta +w_N*|u(t)|^2 -\mu_N(t)\big) u(t), \\ 
u(t=0)&=u(0).
\end{aligned}
\right.
\end{equation}
Here we can choose the phase
$$
\mu_N(t)=\frac12\iint_{\R^3\times\R^3}|u(t,x)|^2w_N(x-y)|u(t,y)|^2 \d x \d y
$$
to ensure an energy compatibility (see \cite{LewNamSch-15} for further explanations). The well-posedness of the Hartree equation is recalled in Lemma \ref{lem:Hartree-equation}.

To describe the fluctuations around the condensate, it is natural to introduce the Fock space
$$ \cF(\gH)= \bigoplus_{n=0}^\infty \gH^n= \C \oplus \bigoplus_{n=1}^\infty \bigotimes^n_{\rm sym} \gH, \quad \gH=L^2(\R^3).$$ 
On this Fock space, we define the creation and annihilation operators $a^*(f)$, $a(f)$, with $f\in \gH$, by
\begin{align*}
(a^* (f) \Psi )(x_1,\dots,x_{n+1})&= \frac{1}{\sqrt{n+1}} \sum_{j=1}^{n+1} f(x_j)\Psi(x_1,\dots,x_{j-1},x_{j+1},\dots, x_{n+1}), \\
 (a(f) \Psi )(x_1,\dots,x_{n-1}) &= \sqrt{n} \int \overline{f(x_n)}\Psi(x_1,\dots,x_n) \d x_n, \quad \forall \Psi \in \gH^n,\, \forall n. 
\end{align*}
These operators satisfy the canonical commutation relations (CCR)
$$ 
[a(f),a(g)]=[a^*(f),a^*(g)]=0,\quad [a(f), a^* (g)]= \langle f, g \rangle, \quad \forall f,g \in \gH.
$$
Equivalently, we can define the operator-valued distributions $a_x^*$ and $a_x$, with $x\in\R^3$, by
$$
a^*(f)=\int_{\R^3}  f(x) a_x^* \d x, \quad a(f)=\int_{\R^3} \overline{f(x)} a_x \d x, \quad \forall f\in \gH.
$$
They satisfy  
$$[a^*_x,a^*_y]=[a_x,a_y]=0, \quad [a_x,a^*_y]=\delta(x-y), \quad \forall x,y\in \R^3.$$
These operators allow us to express operators on Fock space in a convenient way. For example, for every operator $h$ on $L^2(\R^3)$ with kernel $h(x,y)$, we can write
$$
\dGamma(h):= 0\oplus \bigoplus_{n=0}^\infty \sum_{j=1}^n h_j = \int_{\R^3} a_x^* h a_x \d x = \iint_{\R^3\times \R^3} h(x,y) a_x^* a_y \d x \d y.
$$
In particular, $\cN=\dGamma(1)$ is called the number operator.

In our approximation \eqref{eq:PsiN0-intro}-\eqref{eq:PsiNt-intro}, the particles outside of the condensate are described by a unit vector $\Phi(t)=(\psi_n(t))_{n=0}^\infty$ in the excited Fock space
$$ \cF_+(t)= \bigoplus_{n=0}^\infty \bigotimes^n_{\rm sym} \gH_+(t), \quad \gH_+(t)=\{u(t)\}^\bot=Q(t)\gH, \quad Q(t):=1-|u(t) \rangle \langle u(t)|.$$
This vector is governed by the Bogoliubov equation
\bq \label{eq:Bogoliubov-equation}
\left\{
\begin{aligned}
i \partial_t \Phi(t) &=  \bH(t) \Phi(t),\\
\Phi(t=0)&= \Phi(0),
\end{aligned}
\right.
\eq
where 
\begin{align*} 
&\bH(t):= \dGamma(h(t)) + \frac12\iint_{\R^3\times\R^3}\Big(K_2(t,x,y)a^*_x a^*_y +\overline{K_2(t,x,y)}a_x a_y\Big)\d x\,\d y, \\
& h(t)=-\Delta+|u(t,\cdot)|^2\ast w_N -\mu_N(t) + Q(t) \widetilde{K}_1(t) Q(t), \\
& K_2(t, \cdot, \cdot)=Q(t)\otimes Q(t)\widetilde{K}_2(t, \cdot, \cdot). 
\end{align*}
Here $\widetilde{K}_1(t)$ is the operator on $\gH$ with kernel $\widetilde{K}_1(t,x,y)=u(t,x)w_N(x-y)\overline{u(t,y)}$, and $\widetilde{K}_2(t,x,y)=u(t,x)w_N(x-y)u(t,y)$. A heuristic derivation of  \eqref{eq:Bogoliubov-equation} will be revised in Section \ref{sec:Bogoliubov}. 

We will restrict our attention to quasi-free states. Recall that a unit vector $\Psi\in \cF(\gH)$ is called a quasi-free state if it has finite particle number expectation, namely $\langle \Psi, \cN \Psi \rangle<\infty$, and satisfies Wick's Theorem: 
\begin{align*}
&\langle \Psi, a^{\#}(f_{1}) a^{\#}(f_{2}) \cdots a^{\#}(f_{2n-1}) \Psi \rangle = 0,  \\
&\langle \Psi, a^{\#}(f_{1}) a^{\#}(f_{2}) \cdots a^{\#}(f_{2n})  \Psi \rangle = \sum_{\sigma} \prod_{j=1}^n \langle \Psi, a^{\#}(f_{\sigma(2j-1)}) a^{\#}(f_{\sigma(2j)}) \Psi \rangle 
\end{align*}
for all $f_1,...,f_n \in \gH$ and for all $n$. Here $a^{\#}$ is either the creation or annihilation operator and the sum is taken over all permutations $\sigma$ satisfying $\sigma(2j-1)<\min\{\sigma(2j),\sigma(2j+1) \}$ for all $j$. By the definition, any quasi-free state is determined uniquely (up to a phase) by its one-body density matrices $\gamma_\Psi: \gH\to \gH$ and $\alpha_\Psi:\overline{\gH} \equiv \gH^* \to {\gH}$ which are defined by
$$
\left\langle {f,{\gamma _\Psi }g} \right\rangle  = \left\langle \Psi, {{a^*}(g)a(f)} \Psi \right\rangle,\quad \left\langle {{f}, \alpha _\Psi \overline{g} } \right\rangle  = \left\langle \Psi, {a(g)a(f)} \Psi\right\rangle, \quad \forall f,g \in \gH.
$$

In \cite{NamNap-15}, we proved that if $\Phi(0)$ is a quasi-free state, then the solution $\Phi(t)$ to \eqref{eq:Bogoliubov-equation} is a quasi-free state for all $t>0$ and $(\gamma_{\Phi(t)}, \alpha_{\Phi(t)})$ is the unique solution to the system
\bq \label{eq:linear-Bog-dm} 
\left\{
\begin{aligned}
i\partial_t \gamma &= h \gamma - \gamma h + K_2 \alpha - \alpha^* K_2^*, \\
i\partial_t \alpha &= h \alpha + \alpha h^{\rm T} + K_2  + K_2 \gamma^{\rm T} + \gamma K_2,\\
\gamma(t&=0)=\gamma_{\Phi(0)}, \quad \alpha(t=0)  = \alpha_{\Phi(0)}.
\end{aligned}
\right.
\eq
Here $K_2$ is interpreted as an operator $\gH^*\to \gH$ with kernel $K_2(t,x,y)$. Note that \eqref{eq:linear-Bog-dm} is similar (but not identical) to the equations studied in \cite{GriMac-13,Kuz-15b,BacBreCheFroSig-15}.  The well-posedness of \eqref{eq:Bogoliubov-equation}-\eqref{eq:linear-Bog-dm} is recalled in Lemma \ref{lem:Bogoliubov-equation}. 

Now we are ready to state our main result.

\begin{theorem}[Validity of Bogoliubov dynamics] \label{thm:main}  Let $0\le \beta<1/2$.
\begin{itemize}
\item Let $u(t)$ satisfy the Hartree equation \eqref{eq:Hartree-equation}, where the (possibly $N$-dependent) initial state $u(0,\cdot)$ satisfies 
$$\| u(0,\cdot)\|_{W^{\ell,1}(\R^3)} \le \kappa_0$$
for $\ell$ sufficiently large and for a constant $\kappa_0>0$ independent of $N$.

\medskip

\item Let $\Phi(t)=(\psi_n(t))_{n=0}^\infty \in \cF_+(t)$ satisfy the Bogoliubov equation \eqref{eq:Bogoliubov-equation}, where the (possibly $N$-dependent) initial state $\Phi(0)$ is a quasi-free state in $\cF_+(0)$ satisfying
\bq \label{eq:assumption-Phi0}
\big\langle \Phi(0), \cN  \Phi(0) \big\rangle \le \kappa_\eps N^{\eps}\quad \text{and} \quad \big\langle \Phi(0), \dGamma(1-\Delta) \Phi(0) \big\rangle\le \kappa_\eps N^{\beta+\eps}
\eq 
for all $\eps>0$, where the constant $\kappa_\eps>0$ is independent of $N$.

\medskip

\item Let $\Psi_N(t)$ satisfy the Schr\"odinger equation \eqref{eq:schrodingerdynamics} with the initial state
\bq \label{eq:PhiN0-thm}
\Psi_N(0) = \sum_{n=0}^N u(0)^{\otimes (N-n)} \otimes_s \psi_n(0) = \sum_{n=0}^N \frac{(a^*(u(0)))^{N-n}}{\sqrt{(N-n)!}} \psi_n(0). 
\eq
\end{itemize}
Then for all $\eps>0$ and for all $t>0$ we have 
\begin{align} \label{eq:thm-mainresult}
\Big\| \Psi_N(t) - \sum_{n=0}^N u(t)^{\otimes (N-n)} \otimes_s \psi_n(t) \Big\|_{\gH^N}^2 \le C_\eps (1+t)^{1+\eps} N^{(2\beta-1+\eps)/2}
\end{align}
where the constant $C_\eps>0$ depends only on $\kappa_0$ and $\eps$.
\end{theorem}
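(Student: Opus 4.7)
The approach follows the excitation-map strategy of \cite{LewNamSch-15,NamNap-15}. I introduce the unitary $U_N(t):\gH^N \to \cF_+^{\le N}(t):=\1^{\le N}\cF_+(t)$ identifying an $N$-body wave function with its fluctuation vector around the condensate $u(t)^{\otimes N}$, and set $\widetilde\Phi_N(t):=U_N(t)\Psi_N(t)$. Since the initial state \eqref{eq:PhiN0-thm} is precisely $U_N(0)^{*}(\psi_n(0))_{n=0}^N$, the squared error in \eqref{eq:thm-mainresult} equals $\|\widetilde\Phi_N(t)-\1^{\le N}\Phi(t)\|_{\cF_+(t)}^2$, so it suffices to control $\|\widetilde\Phi_N(t)-\Phi(t)\|^2$ in the full Fock space plus the tail $\|\1^{>N}\Phi(t)\|^2 \le N^{-k}\langle\Phi(t),\cN^k\Phi(t)\rangle$, the latter handled by propagating $\cN$-moments along \eqref{eq:Bogoliubov-equation}.

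A direct computation gives the effective generator
\begin{align*}
\widetilde H_N(t) = \1^{\le N}\Bigl(\bH(t) + \tfrac{1}{\sqrt{N-1}}\cL^{(3)}(t) + \tfrac{1}{N-1}\cL^{(4)}(t)\Bigr)\1^{\le N} + R_N(t),
\end{align*}
with $\cL^{(3)}(t)$ cubic and $\cL^{(4)}(t)$ quartic polynomials in $a^*_x,a_x$ whose coefficients involve $w_N,u(t),Q(t)$, and $R_N(t)$ collecting truncation and $\partial_t Q(t)$ corrections. Duhamel reduces \eqref{eq:thm-mainresult} to bounding $\cL^{(3)},\cL^{(4)},R_N$ applied to the Bogoliubov state $\Phi(t)$. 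The cornerstone quantitative estimate is
\begin{align*}
|\langle\Xi_1,\cL^{(3)}(t)\Xi_2\rangle| \le C\|u(t)\|_{W^{1,\infty}}\|w_N\|_{L^{3/2}}^{1/2}\sqrt{\langle\Xi_1,(\cN+1)\Xi_1\rangle\,\langle\Xi_2,\dGamma(1-\Delta)\Xi_2\rangle},
\end{align*}
obtained by Cauchy--Schwarz combined with the Sobolev-type bound $\int w_N(x-y)|f(y)|^2\,dy \le C\|w_N\|_{L^{3/2}}\|f\|_{H^1}^2$. Since $\|w_N\|_{L^{3/2}}=N^\beta\|w\|_{L^{3/2}}$, combining with the a priori bound $\langle\Phi(t),\dGamma(-\Delta)\Phi(t)\rangle \le \kappa_\eps N^{\beta+\eps}$ yields a cubic contribution of order $(N-1)^{-1/2}N^{\beta/2}N^{(\beta+\eps)/2} = N^{\beta-1/2+\eps/2}$, exactly the rate in \eqref{eq:thm-mainresult}; the quartic term, bounded crudely by $(N-1)^{-1}\|w_N\|_\infty\langle\cN^2\rangle = O(N^{3\beta-1+\eps})$, is subdominant for $\beta<1/2$.

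The main obstacle is propagating the a priori bounds. On the Bogoliubov side, differentiating $\langle\Phi(t),(\cN+1)^k\Phi(t)\rangle$ and $\langle\Phi(t),\dGamma(-\Delta)\Phi(t)\rangle$ against \eqref{eq:Bogoliubov-equation} gives Gronwall with coefficients controlled by the Hartree data $\|u(t)\|_{W^{\ell,\infty}}$, so \eqref{eq:assumption-Phi0} is preserved up to polynomial-in-$t$ factors. On the $N$-body side, the number moments $\langle\widetilde\Phi_N(t),(\cN+1)^k\widetilde\Phi_N(t)\rangle \le C_kN^{k\eps}$ are handled via commutators with $\widetilde H_N$ as in \cite{NamNap-15}, but the new and harder ingredient is propagating the excitation kinetic energy $\langle\widetilde\Phi_N(t),\dGamma(-\Delta)\widetilde\Phi_N(t)\rangle \le CN^{\beta+\eps}$: one combines conservation of $\langle\Psi_N(t),H_N\Psi_N(t)\rangle$ (which only gives the \emph{total} kinetic energy) with the expansion of $U_N(t)H_NU_N(t)^*$ in powers of $1/\sqrt{N}$ to extract the excitation kinetic energy, and closes Gronwall on the coupled system $(\langle\cN^k\rangle,\langle\dGamma(-\Delta)\cN^{k-1}\rangle)$ exploiting $w\in C^1_0$. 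With these bounds in hand, the Duhamel/Gronwall estimate above closes and yields \eqref{eq:thm-mainresult}.
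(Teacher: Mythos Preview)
Your strategy is the right one and matches the paper's setup, but there is a genuine gap in the power counting that invalidates the argument precisely in the regime $1/3\le\beta<1/2$ that the theorem is about.

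You claim that the quartic contribution, bounded crudely by $(N-1)^{-1}\|w_N\|_{L^\infty}\langle\cN^2\rangle=O(N^{3\beta-1+\eps})$, is ``subdominant for $\beta<1/2$''. This is false: $N^{3\beta-1}$ does not even tend to zero once $\beta\ge 1/3$, and it is larger than your cubic rate $N^{\beta-1/2}$ already for $\beta>1/4$. The crude $\|w_N\|_{L^\infty}$ bound on the quartic term is exactly what restricted the earlier work \cite{NamNap-15} to $\beta<1/3$; reproducing it cannot give \eqref{eq:thm-mainresult}. The same Sobolev idea you use on $\cL^{(3)}$ must be applied to $\cL^{(4)}$: from $w_N(x-y)\le C\|w_N\|_{L^{3/2}}(-\Delta_x)$ one gets the quadratic form bound
\[
\cL^{(4)}\ \le\ CN^{\beta-1}\,\dGamma(-\Delta)\,\cN,
\]
and this is what makes the propagated kinetic estimate $\langle\widetilde\Phi_N(t),\dGamma(1-\Delta)\widetilde\Phi_N(t)\rangle\le C_\eps N^{\beta+\eps}$ pay off. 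Even then you cannot close directly, because of the extra $\cN$ on the right: the paper handles this by inserting a particle-number cutoff $\1^{\le M}$ with $M=N^{O(\eps)}$ in the Duhamel cross term (using that each $R_j$ changes particle number by at most two), so that $\cN\le M$ on the main piece and the tail $\1^{>M}\Phi(t)$ is negligible by the quasi-free moment bounds. With this localization one obtains $N^{\beta-1}\cdot M\cdot N^{\beta+\eps}=N^{2\beta-1+O(\eps)}$, which is the correct rate.

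A secondary issue: you assert $\langle\widetilde\Phi_N(t),(\cN+1)^k\widetilde\Phi_N(t)\rangle\le C_kN^{k\eps}$ for the many-body fluctuation dynamics, but no such moment propagation is available here (and the paper does not prove it). The paper avoids this entirely by using only $\cN\le N$ on $\cF_+^{\le N}(t)$ together with the cutoff trick above; the needed higher $\cN$-moments are only for the Bogoliubov state $\Phi(t)$, where they follow from quasi-freeness. Your outline of the kinetic-energy propagation via energy conservation and Gronwall is essentially correct, but note that it must be carried out for the quantity $\dGamma(1-\Delta)+R_4$ (not $\dGamma(-\Delta)$ alone), since the error terms $R_j$ are naturally controlled by $R_4$ rather than by $\cN$ when $\beta>1/3$.
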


Strictly speaking, the initial state $\Psi_N(0)$ is not normalized. However, its norm converges to $1$ very fast when $N\to \infty$ (we will see it from the proof). We ignore this trivial normalization in the statement of Theorem \ref{thm:main} for simplicity. 

Since $\Psi_N(0)$ is expected to be the ground state of a trapped system with the interaction potential $w_N(x-y)$, the initial data $u(0,\cdot)$ and $\Phi(0)$  are allowed to depend on $N$. In particular, the assumptions \eqref{eq:assumption-Phi0} on $\Phi(0)$ are motivated by the ground state properties of quadratic Hamiltonians (see Remark \ref{rmk:ground-state}). More generally, we can also assume that \eqref{eq:assumption-Phi0} holds for {\em some} $\eps>0$, and replace the right side of \eqref{eq:thm-mainresult} by $C_\eps (1+t)^{1+\eps} N^{(2\beta-1+9\eps)/2}$ (see the estimate \eqref{eq:thm-quantitative-estimate} in the proof).

Our proof builds on ideas in \cite{LewNamSch-15,NamNap-15}, where the case $0\le \beta<1/3$ was studied. However, the extension to $\beta<1/2$ requires several new tools, most notably a new kinetic estimate for the particles outside of the condensate (see Lemma \ref{lem:HN-kinetic}). Our method can be applied to study the norm approximation in Fock space, for example to simplify significantly the proof in \cite{Kuz-15b}. The range $0\le \beta<1/2$  is expected to be optimal under the assumptions on the initial states in Theorem \ref{thm:main}.

The paper is organized as follows. We will revise the well-posedness of the Hartree equation \eqref{eq:Hartree-equation} and the Bogoliubov equation \eqref{eq:Bogoliubov-equation} in Section \ref{sec:well-posedness}. In section \ref{sec:Bogoliubov}, we reformulate the problem using a unitary transformation from $\gH^N$ to a truncated Fock space, following ideas in \cite{LewNamSerSol-15,LewNamSch-15}. Then we provide several estimates which are useful to implement Bogoliubov's approximation. The proof of Theorem \ref{thm:main} is presented in Section \ref{sec:main-proof}.

\section{Well-posedness of the effective equations}\label{sec:well-posedness}

From \cite[Prop. 3.3 \& Cor. 3.4]{GriMac-13} we have the following well-posedness of the Hartree equation.


\begin{lemma} \label{lem:Hartree-equation} If $u(0,\cdot)\in H^2(\R^3)$, then the Hartree equation \eqref{eq:Hartree-equation} has a unique global solution $u \in C( [0,\infty),H^2(\R^3)) \cap C^1((0,\infty),L^2(\R^3))$. Moreover, if $u(0,\cdot)\in W^{\ell,1}(\R^3)$ with $\ell$ sufficiently large, then $\|u(t,\cdot)\|_{H^2} \le C$, $\|\partial_t u(t,\cdot)\|_{L^2} \le C$ and 
$$ \|u(t, \cdot)\|_{L^\infty(\R^3)} + \|\partial_t u(t,\cdot)\|_{L^\infty(\R^3)} \le \frac{C}{(1+t)^{3/2}}$$
for a constant $C$ depending only on $\|u(0)\|_{W^{\ell,1}(\R^3)}$.
\end{lemma}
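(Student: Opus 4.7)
The plan is to proceed in three stages: local well-posedness in $H^2(\R^3)$ via a contraction argument, global existence with uniform-in-time $H^2$ and $H^1$-derivative bounds via the conservation laws of \eqref{eq:Hartree-equation}, and finally the pointwise $(1+t)^{-3/2}$ decay via the dispersive estimate for the free Schr\"odinger group combined with a bootstrap that exploits the $W^{\ell,1}$ hypothesis on $u(0)$.

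For local existence, one checks that the nonlinear term $F(u):=(w_N*|u|^2 - \mu_N)u$ in \eqref{eq:Hartree-equation} is locally Lipschitz on $H^2(\R^3)$: by Young's inequality $\|w_N*|u|^2\|_{L^\infty}\le \|w_N\|_{L^1}\|u\|_{L^\infty}^2$, combined with the three-dimensional Sobolev embedding $H^2\hookrightarrow L^\infty$, the map $F$ is controlled in $L^2$ and in $H^2$ by polynomials in $\|u\|_{H^2}$; crucially, $\|w_N\|_{L^1}=\|w\|_{L^1}$ is $N$-independent. Standard Picard iteration on
\bqq
u(t)= e^{it\Delta}u(0)-i\int_0^t e^{i(t-s)\Delta}F(u(s))\d s
\eqq
then produces a unique $C([0,T_{\max}),H^2)$ solution. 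To extend this globally and bound it uniformly in $t$ and $N$, observe that $\mu_N(t)$ contributes only a purely time-dependent phase, so mass $\|u(t)\|_{L^2}=\|u(0)\|_{L^2}$ and the Hartree energy $E(u)=\|\nabla u\|_{L^2}^2+\tfrac12\iint|u(x)|^2 w_N(x-y)|u(y)|^2\d x\d y$ are both conserved. Since $w\ge 0$, energy conservation directly controls $\|\nabla u(t)\|_{L^2}$ uniformly in $t$ and $N$, and applying $-\Delta$ to both sides of \eqref{eq:Hartree-equation} followed by Gr\"onwall promotes this to a uniform $H^2$ bound, with $\|\partial_t u\|_{L^2}$ read directly off the equation.

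The main new input is the pointwise decay. Applying the dispersive bound $\|e^{it\Delta}f\|_{L^\infty}\le (4\pi t)^{-3/2}\|f\|_{L^1}$ to Duhamel yields
\bqq
\|u(t)\|_{L^\infty}\lesssim t^{-3/2}\|u(0)\|_{L^1}+\int_0^t (t-s)^{-3/2}\|(w_N*|u|^2)u(s)\|_{L^1}\d s,
\eqq
and $\|(w_N*|u|^2)u\|_{L^1}\le \|w\|_{L^1}\|u\|_{L^2}^2\|u\|_{L^\infty}$. A naive bootstrap of $\|u(s)\|_{L^\infty}\le C(1+s)^{-3/2}$ fails because the kernel $(t-s)^{-3/2}$ is not integrable near $s=t$. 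The standard remedy in three-dimensional small-data Schr\"odinger theory is the pseudoconformal vector field method: the operators $J_j(t)=x_j+2it\partial_{x_j}$ commute with $i\partial_t+\Delta$ and obey a Leibniz rule against the cubic nonlinearity, and they convert $L^2$-control of $J^\alpha u(t)$ into weighted $L^\infty$-decay of $u(t)$. The assumption $u(0)\in W^{\ell,1}$ with $\ell$ sufficiently large supplies, via interpolation with the conserved $H^s$-norms, enough $L^2$-control on $J^\alpha u(0)$ to propagate a bootstrap of the form $\sum_{|\alpha|\le k}\|J^\alpha u(s)\|_{L^2}\lesssim 1$, and this delivers $\|u(t)\|_{L^\infty}\le C(1+t)^{-3/2}$; the same argument applied to the time-differentiated equation yields the bound on $\|\partial_t u(t)\|_{L^\infty}$.

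The principal obstacle is ensuring that every constant in this bootstrap remains independent of $N$, in spite of the concentration $w_N=N^{3\beta}w(N^\beta\,\cdot)$. This is possible because the only norm of $w_N$ that enters the estimates is $\|w_N\|_{L^1}=\|w\|_{L^1}$; higher $L^p$-norms of $w_N$, which blow up as $N\to\infty$, are never used, since $w_N$ is always tested against $|u|^2$ in $L^\infty$ or in $L^1$ through Young's inequality. The requirement that $\ell$ be ``sufficiently large'' is what is needed to absorb the cost of commuting enough pseudoconformal derivatives through the cubic nonlinearity to close the bootstrap.
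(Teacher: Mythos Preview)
The paper does not prove this lemma itself; immediately before the statement it simply writes ``From \cite[Prop.~3.3 \& Cor.~3.4]{GriMac-13} we have the following well-posedness of the Hartree equation,'' and that is the entire argument. So there is no in-paper proof to compare against. Your outline is essentially the strategy of the cited reference: local $H^2$ theory via contraction, global control from mass and energy conservation, and $t^{-3/2}$ pointwise decay from the free dispersive estimate, together with the crucial (and correctly emphasized) observation that only $\|w_N\|_{L^1}=\|w\|_{L^1}$ ever appears, so that all constants are uniform in $N$.

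One ordering issue deserves correction. You claim that ``applying $-\Delta$ to both sides followed by Gr\"onwall promotes this to a uniform $H^2$ bound'' \emph{before} establishing the $L^\infty$ decay. A bare Gr\"onwall on $\|\partial_t u\|_{L^2}$ (equivalently $\|\Delta u\|_{L^2}$) carries a coefficient of order $\|u(t)\|_{L^\infty}^2$; without the decay this can only be bounded by $\|u(t)\|_{H^2}^2$, which closes to an at-best exponentially growing estimate, not a uniform one. In the correct argument the uniform-in-$t$ $H^2$ control and the decay are obtained together in a single bootstrap: once $(1+t)^{3/2}\|u(t)\|_{L^\infty}$ is bounded, $\int_0^\infty\|u(s)\|_{L^\infty}^2\,\d s<\infty$ and the Gr\"onwall factor becomes finite. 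This is a reordering rather than a missing idea. A smaller quibble: at $t=0$ one has $J^\alpha u(0)=x^\alpha u(0)$, and $u(0)\in W^{\ell,1}$ does not by itself place $x^\alpha u(0)$ in $L^2$. The $W^{\ell,1}$ hypothesis is more naturally paired with a direct dispersive bootstrap in $W^{k,\infty}$ (using $\|e^{it\Delta}f\|_{W^{k,\infty}}\lesssim t^{-3/2}\|f\|_{W^{k,1}}$ and splitting the Duhamel integral at $|t-s|=1$, handling the short-time piece via $H^s\hookrightarrow L^\infty$), which is closer to what \cite{GriMac-13} actually does than the pseudoconformal route you describe.
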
  


From now on, we always assume that $u(0,\cdot)\in W^{\ell,1}(\R^3)$ with $\ell$ sufficiently large. We will also denote by $C$ a general constant depending only on $\|u(0,\cdot)\|_{W^{\ell,1}}$ (whose value can be changed from line to line). Indeed, more precisely, $C$ depends only on $\kappa_0$ in the condition $\| u(0,\cdot)\|_{W^{\ell,1}(\R^3)} \le \kappa_0$ (c.f. Theorem \ref{thm:main}).

Next, we recall the well-posedness of the Bogoliubov equation from \cite[Theorem 7]{LewNamSch-15} and \cite[Prop. 4]{NamNap-15}.


\begin{lemma} \label{lem:Bogoliubov-equation} For every initial state $\Phi(0)$ in the quadratic form domain of $\mathcal{Q}(\dGamma (1-\Delta))$, the Bogoliubov equation \eqref{eq:Bogoliubov-equation} has a unique global solution $\Phi \in C([0,\infty), \cF(\gH)) \cap L^\infty_{\rm loc} ((0,\infty), \mathcal{Q}(\dGamma (1-\Delta)))$. Moreover, if $\Phi(0)$ is a quasi-free state in $\cF_+(0)$, then $\Phi(t)$ is a quasi-free state in $\cF_+(t)$ and
$$
\langle \Phi(t), \cN \Phi(t) \rangle \le C \Big( \langle \Phi(0),\cN \Phi(0)\rangle^2 + [\log(2+t)]^2\Big).
$$
\end{lemma}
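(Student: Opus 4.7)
The plan is to prove the lemma in three stages: global existence and uniqueness via regularization, preservation of $\cF_+(t)$ and the quasi-free property via the algebraic structure of $\bH(t)$, and the quantitative number bound via a Gr\"onwall estimate driven by the dispersive decay from Lemma~\ref{lem:Hartree-equation}.

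For global well-posedness in the class $C([0,\infty),\cF(\gH))\cap L^\infty_{\loc}((0,\infty),\mathcal{Q}(\dGamma(1-\Delta)))$, I would regularize $\bH(t)$ by the particle-number cutoff $\bH^{(M)}(t):=\1_{\cN\le M}\bH(t)\1_{\cN\le M}$, which is a bounded self-adjoint operator that is norm-continuous in $t$ and hence produces a unique unitary propagator via the Dyson series. Uniform-in-$M$ bounds on $\langle\Phi^{(M)}(t),\cN\Phi^{(M)}(t)\rangle$ and $\langle\Phi^{(M)}(t),\dGamma(1-\Delta)\Phi^{(M)}(t)\rangle$, derived from the commutator argument of the last stage, allow passage to the limit $M\to\infty$ by weak-$*$ compactness in the quadratic form topology. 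Uniqueness of the limit follows from a Gr\"onwall estimate on $\|\Phi_1(t)-\Phi_2(t)\|^2$ for two candidate solutions.

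Preservation of $\cF_+(t)$ is a consequence of the fact that $\bH(t)$ has been tailored---through the projections $Q(t)$ appearing in $K_2(t)$ and in the $Q(t)\widetilde K_1(t) Q(t)$ contribution to $h(t)$---to be compatible with the Hartree flow $i\partial_t u=h_0(t) u$ with $h_0(t):=-\Delta+w_N\ast|u|^2-\mu_N(t)$. Concretely, combining the CCR commutator $[a(u(t)),\bH(t)]$ with $\partial_t a(u(t))$ yields an evolution equation $i\partial_t\Psi(t)=\bH(t)\Psi(t)$ for $\Psi(t):=a(u(t))\Phi(t)$ with $\Psi(0)=0$, so $\Psi\equiv 0$ by uniqueness, which is exactly $\Phi(t)\in\cF_+(t)$. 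The quasi-free property is then preserved because a purely quadratic Hamiltonian generates a time-dependent Bogoliubov transformation on $\cF(\gH)$, and such transformations send quasi-free states to quasi-free states; equivalently, one verifies that the pair $(\gamma_{\Phi(t)},\alpha_{\Phi(t)})$ solves the closed linear system \eqref{eq:linear-Bog-dm} and identifies $\Phi(t)$ with the quasi-free state having these density matrices by uniqueness of both evolutions.

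The quantitative heart is the number bound. Since $\dGamma(h(t))$ commutes with $\cN$, the commutator $i[\bH(t),\cN]$ reduces to the pair terms:
$$
\partial_t\langle\Phi(t),\cN\Phi(t)\rangle \;=\; 2\,\mathrm{Im}\iint K_2(t,x,y)\,\langle\Phi(t),a_x^*a_y^*\Phi(t)\rangle\,\rd x\,\rd y.
$$
Applying the standard Fock estimate $\|\iint k(x,y)a_x^*a_y^*\Phi\|\le C\|k\|_{\rm HS}\|(\cN+1)\Phi\|$ together with the dispersive bound $\|u(t)\|_\infty\le C(1+t)^{-3/2}$ from Lemma~\ref{lem:Hartree-equation}, which yields time-integrable control on $\|K_2(t)\|_{\rm HS}$, a Gr\"onwall inequality applied to $\log(1+\langle\Phi(t),\cN\Phi(t)\rangle)$ produces the claimed logarithmic-squared growth. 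The main difficulty lies precisely here: the inequality $\partial_t n(t)\le C\|K_2(t)\|_{\rm HS}(1+n(t))$ is essentially sharp, so the decay of $\|u(t)\|_\infty$ must be exploited in an essential way to avoid exponential growth in $t$. An analogous but more delicate commutator computation with $\dGamma(1-\Delta)$ then yields the kinetic-energy bound needed to stay in the form domain $\mathcal{Q}(\dGamma(1-\Delta))$.
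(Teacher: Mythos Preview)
The paper does not supply its own proof here; the lemma is quoted from \cite[Theorem~7]{LewNamSch-15} and \cite[Prop.~4]{NamNap-15}. Your outline of the well-posedness via particle-number regularization and of the preservation of $\cF_+(t)$ and of the quasi-free property follows the standard route of those references and is essentially correct.

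There is, however, a genuine gap in your derivation of the number bound. The differential inequality $\partial_t n(t)\le C\|K_2(t)\|_{\rm HS}(1+n(t))$ is fine, but the Hilbert--Schmidt norm of $K_2(t)$ is \emph{not} bounded uniformly in $N$: since $\|w_N\|_{L^2}^2=N^{3\beta}\|w\|_{L^2}^2$, one has only (cf.~\eqref{eq:L2-K2-N3beta})
\[
\|K_2(t)\|_{\rm HS}^2\le \|u(t)\|_{L^\infty}^2\,\|w_N\|_{L^2}^2\,\|u(t)\|_{L^2}^2\le \frac{CN^{3\beta}}{(1+t)^3}.
\]
Thus $\int_0^\infty\|K_2(s)\|_{\rm HS}\,\rd s\le CN^{3\beta/2}$, and your Gr\"onwall step yields at best $n(t)\le (1+n(0))\exp\big(CN^{3\beta/2}\big)$, which for $\beta>0$ is useless and in any case does not have the claimed form $C\big(n(0)^2+[\log(2+t)]^2\big)$. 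The dispersive decay of $\|u(t)\|_\infty$ handles the time integrability but not the $N$-dependence carried by $\|w_N\|_{L^2}$.

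To obtain the $N$-independent bound stated in the lemma one must trade the Hilbert--Schmidt norm for the \emph{operator} norm $\|K_2(t)\|\le C(1+t)^{-3}$ of~\eqref{eq:norm-K2}, which \emph{is} uniform in $N$. This is what is done in \cite{NamNap-15}: one works with the density-matrix system~\eqref{eq:linear-Bog-dm} and exploits the algebraic relation $\alpha\alpha^*=\gamma(1+\gamma)$ valid for pure quasi-free states, from which the quadratic dependence on $n(0)$ and the $[\log(2+t)]^2$ term emerge. The generic pairing estimate $\|\iint k\,a^*a^*\Phi\|\le C\|k\|_{\rm HS}\|(\cN+1)\Phi\|$ that you invoke cannot distinguish $\|K_2\|$ from $\|K_2\|_{\rm HS}$ and therefore cannot close the argument when $\beta>0$.
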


We have two remarks on the Bogoliubov equation \eqref{eq:Bogoliubov-equation}. First, although the Bogoliubov Hamiltonian $\bH(t)$ is not necessarily bounded from below and has not been defined as a self-adjoint operator, the solution $\Phi(t)$ to \eqref{eq:Bogoliubov-equation} can be still interpreted as an evolution generated by quadratic forms (see \cite[Theorems 7, 8]{LewNamSch-15} for further discussion). Second, when $\Phi(t)$ is a quasi-free state, then the Bogoliubov equation \eqref{eq:Bogoliubov-equation} becomes equivalent to the system \eqref{eq:linear-Bog-dm} (see \cite[Prop. 4]{NamNap-15} for more details), but we will not need this fact in the rest of paper.

The main new result of this section is the following kinetic estimate. 

\begin{lemma} \label{lem:bH-kinetic} Assume that $\Phi(0)$ is a quasi-free state in $\cF_+(0)$ satisfying 
$$
\big \langle \Phi(0), \dGamma(1-\Delta)  \Phi(0) \big\rangle \le  \kappa_\eps  N^{\beta+\eps} 
$$
for some $\eps>0$, where the constant $\kappa_\eps$ is independent of $N$. Then
$$
\big \langle \Phi(t), \dGamma(1-\Delta)  \Phi(t) \big\rangle \le  C_\eps  N^{\beta+\eps} ,\quad \forall t>0.
$$
\end{lemma}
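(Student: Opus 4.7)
The plan is a Grönwall argument for $\mathcal{K}(t):=\langle \Phi(t),\dGamma(1-\Delta)\Phi(t)\rangle$. Since $\dGamma(1-\Delta)$ is time-independent and $i\partial_t\Phi(t)=\bH(t)\Phi(t)$,
$$
\frac{d}{dt}\mathcal{K}(t)\;=\;i\bigl\langle\Phi(t),[\dGamma(1-\Delta),\bH(t)]\Phi(t)\bigr\rangle.
$$
The goal is to prove a bound of the form $\frac{d}{dt}\mathcal{K}(t)\le C(t)\bigl(\mathcal{K}(t)+N^{\beta}\bigr)$ with $C(t)$ uniform in $N$. Combined with the hypothesis $\mathcal{K}(0)\le\kappa_\eps N^{\beta+\eps}$, Grönwall then yields $\mathcal{K}(t)\le C_\eps N^{\beta+\eps}$ for all $t>0$, with the constant allowed to depend polynomially on $t$.

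Writing $\bH(t)=\dGamma(h(t))+P(t)+P(t)^{*}$ with $P(t)=\tfrac12\iint K_2(t,x,y)a^{*}_x a^{*}_y\,dx\,dy$, and using $[\dGamma(1-\Delta),\dGamma(h)]=\dGamma([1-\Delta,h])$, $[\dGamma(1-\Delta),a^{*}_x a^{*}_y]=((1-\Delta_x)+(1-\Delta_y))a^{*}_x a^{*}_y$ together with the symmetry of $K_2$, the commutator decomposes into a diagonal piece $\dGamma([-\Delta,V(t)]+[-\Delta,Q\widetilde K_1 Q])$ (the scalar $\mu_N$ commutes) and a pairing piece $\iint((1-\Delta_x)K_2(t,x,y))a^{*}_x a^{*}_y+\mathrm{h.c.}$ For the diagonal part, the crucial point is that all derivatives of $V=|u|^2\ast w_N$ can be shifted onto the smooth factor $u$ by moving the Laplacian through the convolution: $\nabla V=(\nabla|u|^2)\ast w_N$ and $\Delta V=(\Delta|u|^2)\ast w_N$, so by Young's inequality and Lemma~\ref{lem:Hartree-equation} one obtains $\|\nabla V\|_{L^\infty}+\|\Delta V\|_{L^\infty}\le C$ uniformly in $N$. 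Hence $\pm[-\Delta,V]\le C(1-\Delta)$ as one-body operators, and $\pm\dGamma([-\Delta,V])\le C\mathcal{K}$. For $[-\Delta,Q\widetilde K_1 Q]$, the two most singular $\Delta w_N$-contributions in the kernel $(-\Delta_x+\Delta_y)\widetilde K_1(x,y)$ cancel by the symmetry $w_N(x-y)=w_N(y-x)$; the remaining terms carry at most one $\nabla w_N$ and are absorbed by integration by parts onto the $u$-factors, yielding $\pm\dGamma([-\Delta,Q\widetilde K_1 Q])\le C\mathcal{K}$.

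The main obstacle is the pairing term $\iint((1-\Delta_x)K_2(x,y))a^{*}_x a^{*}_y+\mathrm{h.c.}$ Expanding $\widetilde K_2(x,y)=u(x)w_N(x-y)u(y)$ and applying $-\Delta_x$ produces, among others, the piece $u(x)(\Delta w_N)(x-y)u(y)$, whose Hilbert--Schmidt norm is of order $N^{7\beta/2}$, far exceeding the affordable budget: the naïve bound $|\langle \Phi,\iint L\,a^{*}a^{*}\Phi\rangle|\le C\|L\|_{HS}\langle\cN+1\rangle_\Phi$ is useless. My plan is to first rewrite the expectation as a Hilbert--Schmidt inner product with the pair density,
$$
\bigl\langle \Phi(t),\iint L(x,y)\,a^{*}_x a^{*}_y\,\Phi(t)\bigr\rangle \;=\;\langle \alpha_{\Phi(t)},L\rangle_{HS}
$$
(recall $\Phi(t)$ remains quasi-free by Lemma~\ref{lem:Bogoliubov-equation}), and then distribute the kinetic weight symmetrically between $\alpha_{\Phi(t)}$ and $K_2$. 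Exploiting $\Delta_x w_N(x-y)=\Delta_y w_N(x-y)$, I symmetrize the derivative as $\tfrac12((1-\Delta_x)+(1-\Delta_y))K_2$ and integrate by parts inside the pairing to transfer the Laplacians from $w_N$ onto the smooth factors $u$. This leads to a Cauchy--Schwarz estimate of the form
$$
\bigl|\langle \alpha_{\Phi(t)},(1-\Delta_x)K_2\rangle_{HS}\bigr|\;\le\;\bigl\|(1-\Delta)^{1/2}\alpha_{\Phi(t)}\bigr\|_{HS}\,\|M\|_{HS},
$$
where the first factor is controlled for quasi-free states by the Bogoliubov identity $\alpha\alpha^{*}=\gamma(1+\gamma)$ as $\|(1-\Delta)^{1/2}\alpha\|_{HS}^2=\Tr((1-\Delta)\gamma(1+\gamma))\le C\mathcal{K}(\mathcal{K}+1)$, and the reshuffled kernel $M$ carries the Laplacians on $u$-factors, giving a dimensional gain that brings $\|M\|_{HS}$ down to the size that can be absorbed. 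Combined with $\cN\le\dGamma(1-\Delta)$ and Young's inequality this produces $\frac{d}{dt}\mathcal{K}(t)\le C(\mathcal{K}(t)+N^{\beta})$, and Grönwall closes the argument. The subtlety in this last step—the fine redistribution of the kinetic weight, which is available only thanks to the quasi-free structure of $\Phi(t)$ and the symmetry $w_N(x-y)=w_N(y-x)$—is the heart of the proof and the reason the argument does not cover $\beta\geq 1/2$.
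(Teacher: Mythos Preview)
Your proof has a genuine gap in the pairing term. The commutator $[\dGamma(1-\Delta),P+P^*]$ produces the kernel $(1-\Delta_x)K_2(t,x,y)$, and this object is strictly more singular than $K_2$: already $\|K_2\|_{\rm HS}^2\sim N^{3\beta}$, and every derivative landing on $w_N$ costs another factor $N^{\beta}$. Your Cauchy--Schwarz split $|\langle\alpha,(1-\Delta_x)K_2\rangle_{\rm HS}|\le\|(1-\Delta)^{1/2}\alpha\|_{\rm HS}\,\|M\|_{\rm HS}$ forces $M=(1-\Delta_x)^{1/2}K_2$, whose Hilbert--Schmidt norm is of order $N^{5\beta/2}$, not $O(1)$ as would be required to close Gr\"onwall linearly in $\mathcal K$. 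The ``integration by parts to transfer the Laplacians onto $u$'' cannot save this: if you move both derivatives off $w_N$, then by IBP at least one full Laplacian lands on $\alpha$, and $\|(1-\Delta)\alpha\|_{\rm HS}^2=\Tr\bigl((1-\Delta)\gamma(1+\gamma)(1-\Delta)\bigr)$ is \emph{not} controlled by $\mathcal K$. No redistribution among $\alpha$, $u$ and $w_N$ can simultaneously keep at most one derivative on $\alpha$ and none on $w_N$. (Your cancellation in $[-\Delta,\widetilde K_1]$ is correct, but it relies on the \emph{difference} $-\Delta_x+\Delta_y$ in an operator commutator; the pairing instead produces the \emph{sum} $(1-\Delta_x)+(1-\Delta_y)$, so no such cancellation is available there.)

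The paper circumvents this obstacle by never forming $[\dGamma(1-\Delta),\bH]$. It differentiates the almost-conserved quantity $\langle\Phi(t),\bH(t)\Phi(t)\rangle$, so that only $\partial_t\bH(t)$ appears, and then compares $\bH(t)$ with $\dGamma(-\Delta)$ through a quadratic-form inequality $\pm\bigl(\bH(t)+\dGamma(\Delta)\bigr)\le\eta\,\dGamma(1-\Delta)+C_\eps(\cN+N^{\beta+\eps})/(\eta(1+t)^3)$ (Lemma~\ref{lem:bHt-dbHt}). The crucial input is the ground-state bound for quadratic Hamiltonians (Lemma~\ref{lem:Bog-GSE}), which controls the pairing by the \emph{negative} Sobolev norm $\|(1-\Delta)^{-1/2}K_2\|_{\rm HS}^2\le C_\eps N^{\beta+\eps}$ (Lemma~\ref{lem:Sobolev-inverse-K2}); the inverse power \emph{gains} a factor over $\|K_2\|_{\rm HS}^2\sim N^{3\beta}$, which is exactly the opposite direction from what your commutator approach needs. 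This argument makes no use of the quasi-free structure of $\Phi(t)$. A secondary issue: the lemma asserts a bound with $C_\eps$ independent of $t$, which the paper obtains because the $(1+t)^{-3}$ decay renders the Gr\"onwall factor integrable; your differential inequality $\frac{d}{dt}\mathcal K\le C(\mathcal K+N^\beta)$ with constant $C$ would only give exponential growth in $t$.
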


Hereafter, $C_\eps$ is a general constant depending only on $\|u(0,\cdot)\|_{W^{\ell,1}}$ (more precisely, on $\kappa_0$ in the condition $\| u(0,\cdot)\|_{W^{\ell,1}} \le \kappa_0$) and $\eps$.

To prove Lemma \ref{lem:bH-kinetic}, we will need a general lower bound on the ground state energy of quadratic Hamiltonians. 

\begin{lemma} \label{lem:Bog-GSE} Let $H>0$ be a self-adjoint operator on $\gH$. Let $K:\overline{\gH}\equiv \gH^*\to \gH$ be an operator with kernel $K(x,y) \in \gH^2$. Assume that $K H^{-1} K^* \le H$ and that $H^{-1/2}K$ is Hilbert-Schmidt. Then 
$$ \dGamma(H) + \frac{1}{2} \iint \Big( K(x,y) a_x^* a_y^* + \overline{K(x,y)}a_x a_y \Big) \d x \d y \ge -\frac{1}{2} \| H^{-1/2} K\|_{\rm HS}^2.$$
\end{lemma}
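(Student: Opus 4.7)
My plan is to prove the bound via a completing-the-square argument. For any operator $L:\gH\to\gH$ (to be chosen), introduce the shifted annihilation-like operator
$$B_x := a_x + \int L(x,z)\, a_z^*\, \d z.$$
The quantity $\iint H(x,y) B_x^* B_y\,\d x\,\d y$ is manifestly non-negative as a quadratic form on $\cF(\gH)$: formally, it is the norm-squared of $H^{1/2}$ applied to the ``vector'' $(B_x)_x$, and $H>0$. The strategy is to choose $L$ so that expanding this square reproduces $Q$, with the remainder being absorbable using the hypothesis.

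The natural choice is $L = H^{-1}K$, for which $HL=K$, $L^*H = K^*$, and $L^*HL=K^*H^{-1}K$. Using the CCR $[a_x,a_y^*]=\delta(x-y)$, converting the delta-contribution into a trace, and invoking the bosonic symmetry of $K$ (so that $\iint K^*(z,y) a_z a_y = \iint \overline{K(x,y)} a_x a_y$), the expansion reads
\begin{align*}
\iint H(x,y) B_x^* B_y\,\d x\,\d y
&= \dGamma(H) + \iint K(x,y) a_x^* a_y^* \,\d x\,\d y + \iint \overline{K(x,y)} a_x a_y \,\d x\,\d y \\
&\quad + \dGamma\bigl((K^*H^{-1}K)^T\bigr) + \|H^{-1/2}K\|_{\rm HS}^2 \\
&= 2Q - \dGamma(H) + \dGamma\bigl((K^*H^{-1}K)^T\bigr) + \|H^{-1/2}K\|_{\rm HS}^2.
\end{align*}
Non-negativity of the left-hand side and rearrangement give
$$2Q \;\ge\; \dGamma\bigl(H - (K^*H^{-1}K)^T\bigr) - \|H^{-1/2}K\|_{\rm HS}^2.$$
Under the convention $K:\overline{\gH}\to\gH$ adopted in the lemma, $(K^*H^{-1}K)^T$ is naturally identified with $KH^{-1}K^*$: the two are related by complex conjugation, which preserves operator inequalities between self-adjoint operators. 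The hypothesis $KH^{-1}K^*\le H$ then gives $H - (K^*H^{-1}K)^T \ge 0$, so $\dGamma(\cdot)\ge 0$, and we conclude $Q \ge -\tfrac12 \|H^{-1/2}K\|_{\rm HS}^2$.

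The only non-trivial technical point is making the formal shift $B_x$ rigorous, since $a_x$ and $a_x^*$ are operator-valued distributions. I would handle this by first approximating $H$ by its spectral truncations $H_n := \mathbf{1}_{[1/n,n]}(H)\,H$, on which $H_n^{-1}$ is bounded and the expansion above is literal after testing against nice vectors in $\cF(\gH)$; the Hilbert-Schmidt assumption $\|H^{-1/2}K\|_{\rm HS}<\infty$ then allows one to pass to the limit $n\to\infty$ by standard monotone/dominated convergence, ensuring that the constant $\Tr(K^*H^{-1}K)$ on the right is finite throughout.
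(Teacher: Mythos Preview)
The paper does not give its own proof of this lemma: it simply quotes the result from \cite[Lemma 9]{NamNapSol-16} (with a pointer also to \cite[Theorem 5.4]{BruDer-07}). So there is nothing in the paper to compare against directly.

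That said, your completing-the-square argument is correct and is in fact precisely the method used in the cited reference. Writing $B_x = a_x + \int (H^{-1}K)(x,z)\,a_z^*\,\d z$ and expanding $\iint H(x,y)\,B_x^* B_y\,\d x\,\d y \ge 0$ via the CCR yields exactly
\[
2Q \;=\; \iint H(x,y)\,B_x^* B_y\,\d x\,\d y \;+\; \dGamma\bigl(H - (K^*H^{-1}K)^{T}\bigr) \;-\; \|H^{-1/2}K\|_{\rm HS}^2,
\]
and the hypothesis $KH^{-1}K^* \le H$ is precisely what makes the $\dGamma$ term non-negative (your remark that, under the convention $K:\overline{\gH}\to\gH$ with symmetric kernel, $(K^*H^{-1}K)^{T}$ and $KH^{-1}K^*$ are identified via complex conjugation is correct, and conjugation does preserve operator inequalities between self-adjoint operators). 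Your handling of the technical approximation by spectral truncations $H_n=\1_{[1/n,n]}(H)\,H$ and passage to the limit using the Hilbert--Schmidt hypothesis is also the standard route to make the formal computation rigorous. In short: your proof is correct and is essentially the one in \cite{NamNapSol-16}.
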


This result is taken from \cite[Lemma 9]{NamNapSol-16} (see also \cite[Theorem 5.4]{BruDer-07}). Note that
$$ \| H^{-1/2} K\|_{\rm HS}^2 = \iint |H_x^{-1/2} K(x,y)|^2 \d x \d y =: \|H_x^{-1/2}K(\cdot, \cdot)\|_{L^2}^2.$$
Here we write $H_x$ to mention that the operator $H$ acts on the $x$-variable.

If we apply Lemma \ref{lem:Bog-GSE} with $H=1+\|K_2\|$ and $K=\pm K_2$, then we get
\bq \label{eq:bound-paring-dG1}
\pm \frac{1}{2} \iint \Big( K_2(t,x,y) a_x^* a_y^* + \overline{K_2(t,x,y)}a_x a_y \Big) \d x \d y \le C \cN + \frac{CN^{3\beta}}{(1+t)^3}.
\eq
Here we have used the bound on $\|K_2\|$ in \eqref{eq:norm-K2} and 
\begin{align} \label{eq:L2-K2-N3beta}
\|K_2(t,\cdot,\cdot)\|_{L^2}^2 &\le \|\widetilde K_2(t,\cdot,\cdot)\|_{L^2}^2  = \iint | u(t,x)|^2 |w_N(x-y)|^2 |u(t,y)|^2 \d x \d y \nn\\
&\le \|w_N\|_{L^2}^2 \|u(t,\cdot)\|_{L^\infty}^2  \|  u(t,\cdot)\|_{L^2}^2 \le \frac{CN^{3\beta}}{(1+t)^3}
\end{align}
by Lemma \ref{lem:Hartree-equation}. In order to improve the factor $N^{3\beta}$ in \eqref{eq:bound-paring-dG1}, we will apply Lemma \ref{lem:Bog-GSE} with $H=1-\Delta$. We will need the following estimate.

\begin{lemma}\label{lem:Sobolev-inverse-K2} For all $\eps>0$ we have 
\begin{align*}
\|(1-\Delta_x)^{-1/2} K_2(t, \cdot, \cdot)\|^2_{L^2} + \|(1-\Delta_x)^{-1/2} \partial_t K_2(t, \cdot, \cdot)\|^2_{L^2} \le \frac{C_\eps N^{\beta+\eps}}{(1+t)^3}.
\end{align*}
\end{lemma}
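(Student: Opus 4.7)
The plan is to exploit the smoothing effect of $(1-\Delta_x)^{-1/2}$ to trade against the concentration of $w_N$. The crucial observation is that while $\|w_N\|_{L^2}\sim N^{3\beta/2}$ is large (this is what drives \eqref{eq:L2-K2-N3beta}), the weaker Lebesgue norm $\|w_N\|_{L^{6/5}}=\|w\|_{L^{6/5}}N^{\beta/2}$ grows much more slowly, so applying the dual Sobolev embedding $L^{6/5}(\R^3)\hookrightarrow H^{-1}(\R^3)$ in the $x$-variable should yield an improved power $N^\beta$ instead of $N^{3\beta}$.

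First I would decompose $K_2=(Q(t)\otimes Q(t))\widetilde{K}_2$ into four pieces,
$$K_2=\widetilde{K}_2-(P\otimes 1)\widetilde{K}_2-(1\otimes P)\widetilde{K}_2+(P\otimes P)\widetilde{K}_2,\qquad P=|u(t)\rangle\langle u(t)|,$$
and estimate each term separately in $\|(1-\Delta_x)^{-1/2}\cdot\|_{L^2}$. For the dominant piece $\widetilde{K}_2(t,x,y)=u(t,x)w_N(x-y)u(t,y)$, I would fix $y$ and combine the Sobolev duality with H\"older to obtain
$$\|(1-\Delta_x)^{-1/2}\widetilde{K}_2(t,\cdot,y)\|_{L^2}^2\le C|u(t,y)|^2\|u(t,\cdot)\|_{L^\infty}^2\|w_N\|_{L^{6/5}}^2,$$
and then integrate in $y$, producing $C\|u(t,\cdot)\|_{L^\infty}^2\|u(t,\cdot)\|_{L^2}^2 N^\beta\le CN^\beta/(1+t)^3$ via Lemma \ref{lem:Hartree-equation}. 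The three correction terms involving $P$ all contain a factor of the form $(|u|^2\ast w_N)$, controllable by $\|u\|_{L^\infty}^2\|w_N\|_{L^1}\le C/(1+t)^3$; combined with the trivial $L^2$-boundedness of $(1-\Delta_x)^{-1/2}$ and the rapid $L^\infty$-decay of $u$, these pieces contribute only $C/(1+t)^6$, which is of lower order in both $N$ and $t$.

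For the time-derivative part, I would apply the Leibniz rule to $K_2(t)=(Q(t)\otimes Q(t))\widetilde{K}_2(t)$, using $\partial_t Q(t)=-|\partial_t u\rangle\langle u|-|u\rangle\langle\partial_t u|$ together with the product rule for $\partial_t\widetilde{K}_2$. Every resulting term has the same structure as a term treated for $K_2$ itself, but with exactly one factor of $u$ replaced by $\partial_t u$. The same estimates then repeat verbatim, now invoking the decay $\|\partial_t u(t)\|_{L^\infty}\le C(1+t)^{-3/2}$ and the uniform $L^2$-boundedness of $\partial_t u$ from Lemma \ref{lem:Hartree-equation}, and the resulting time factor $(1+t)^{-3}$ comes out unchanged.

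The hard part will be a conceptual one rather than a technical one: recognizing that the correct regularity scale at which to pair $(1-\Delta_x)^{-1/2}$ with the delta-like factor $w_N$ is $H^{-1}$ (equivalently $L^{6/5}$ by Sobolev duality in three dimensions), which is what produces the desired $N^\beta$ scaling, comfortably inside the $N^{\beta+\eps}$ tolerance requested. A secondary, but still only mild, nuisance is that $Q(t)$ does not commute with $(1-\Delta_x)^{-1/2}$, so the projections have to be dealt with by the explicit four-term expansion above rather than by commuting $Q$ through the resolvent; fortunately, as noted, the projection terms gain extra powers of $(1+t)^{-1}$ and lose the bad factor $N^\beta$, so they are subcritical.
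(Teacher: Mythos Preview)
Your proposal is correct and takes a genuinely different route from the paper. The paper proceeds via Fourier analysis and interpolation: it keeps the crude bound $\|\widetilde K_2\|_{L^2}^2\le CN^{3\beta}(1+t)^{-3}$ at regularity $s=0$, then proves an $N$-independent bound $\|(1-\Delta_x)^{-3/4-\eps}\widetilde K_2\|_{L^2}^2\le C_\eps(1+t)^{-3}$ by computing the Fourier transform of $\widetilde K_2$ explicitly and using that $(1+|p|^2)^{-3/2-2\eps}$ is integrable on $\R^3$; H\"older interpolation in Fourier space at $s=1/2$ then yields $N^{\beta+\eps}$. Your argument bypasses all of this by working directly in physical space: the dual Sobolev embedding $L^{6/5}(\R^3)\hookrightarrow H^{-1}(\R^3)$ converts the $(1-\Delta_x)^{-1/2}$ into an $L^{6/5}$ norm in $x$, which H\"older factors as $\|u\|_{L^\infty}\|w_N\|_{L^{6/5}}=CN^{\beta/2}\|u\|_{L^\infty}$, and squaring gives $N^\beta$ with no $\eps$-loss. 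Your approach is shorter, avoids Fourier computations entirely, and in fact proves the slightly sharper bound $CN^\beta(1+t)^{-3}$ rather than $C_\eps N^{\beta+\eps}(1+t)^{-3}$; the paper's interpolation scheme is perhaps more robust (it would adapt if one needed intermediate regularities), but for the specific exponent $s=1/2$ your direct Sobolev argument is cleaner. Both approaches treat the projection corrections the same way, as subcritical terms controlled in plain $L^2$.
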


\begin{proof} We will present a detailed proof for $\partial_t K_2(t)$ and $K_2(t)$ can be treated by the same way. Recall that 
$$ K_2(t,\cdot ,\cdot)=Q(t)\otimes Q(t) \widetilde K_2(t,\cdot,\cdot), \quad \widetilde K_2(t,x,y)=u(t,x)w_N(x-y) u(t,y).$$
Hence,  
$$ \partial_t K_2(t)= \partial_t Q(t) \otimes Q(t) \widetilde K_2(t) + Q(t) \otimes \partial_t Q(t) \widetilde K_2(t)+Q(t)\otimes Q(t) \partial_t \widetilde K_2(t).$$
Since $\partial_t Q(t)=-| \partial_t u(t)\rangle \langle u(t)| -|u(t)\rangle \langle \partial_t u(t)|$, we have
\begin{align*}
&\|\partial_t Q(t) \otimes Q(t) \widetilde K_2(t,\cdot,\cdot)\|_{L^2} \le \| (\partial_t Q(t) \otimes 1) \widetilde K_2(t,\cdot,\cdot)\|_{L^2} \\
& \le \| (| \partial_t u(t)\rangle \langle u(t)| \otimes 1 )  \widetilde K_2(t,\cdot,\cdot)\|_{L^2} + \| (|  u(t)\rangle \langle \partial_t u(t)|  \otimes 1 ) \widetilde K_2(t,\cdot,\cdot)\|_{L^2}
\end{align*}
Using Lemma \ref{lem:Hartree-equation} and $\|w_N\|_{L^1}=\|w\|_{L^1}$, it is straightforward to see that
\begin{align*} 
&\left\| (|\partial_t u\rangle \langle u|  \otimes 1) \widetilde K_2(t, \cdot, \cdot) \right\|_{L^2}^2 \nn\\
&=\iint  \left| \int \overline{u(t,z)}  u(t,z) w_N(z-y) u(t,y) \d z \right|^2 |\partial_t u(t,x)|^2 \d x \d y \nn\\
& \le \|u(t,\cdot)\|_{L^\infty}^4 \|w_N\|_{L^1}^2 \|u(t,\cdot)\|_{L^2}^2   \|\partial_t u(t,\cdot)\|_{L^2}^2  \le \frac{C}{(1+t)^{3}}.
\end{align*}
combining this with similar estimates, we find that
\begin{align} \label{eq:L2-dtQ-K2}
&\|\partial_t Q(t) \otimes Q(t) \widetilde K_2(t,\cdot,\cdot)\|_{L^2} + \| Q(t) \otimes \partial_t Q(t) \widetilde K_2(t,\cdot,\cdot)\|_{L^2} \nn \\
& \le \|(\partial_t Q(t) \otimes 1) \widetilde K_2(t,\cdot,\cdot)\|_{L^2} + \| (1 \otimes \partial_t Q(t)) \widetilde K_2(t,\cdot,\cdot)\|_{L^2}  \le \frac{C}{(1+t)^{3/2}}.
\end{align}
By the same argument, we also obtain
$$\| (1-Q(t)\otimes Q(t)) \partial_t \widetilde K_2(t,\cdot,\cdot)\|_{L^2} \le \frac{C}{(1+t)^{3/2}}.$$

Note that $(1-\Delta_x)^{-1/2} \le 1$ on $L^2$, and hence we can insert $(1-\Delta_x)^{-1/2}$ into the above $L^2$ norm estimates for free. It remains to show that 
\bq \label{eq:eq:dt-K2-half}
\|(1-\Delta_x)^{-1/2} \partial_t \widetilde K_2(t, \cdot, \cdot)\|^2_{L^2} \le \frac{C_\eps N^{\beta+\eps}}{(1+t)^3}, \quad \forall \eps>0.
\eq
Similarly to \eqref{eq:L2-K2-N3beta}, we have
$$
\|\partial_t \widetilde K_2(t,\cdot,\cdot) \|_{L^2}^2 \le  	\frac{CN^{3\beta}}{(1+t)^3}.
$$
Therefore, by interpolation (more precisely, by H\"older's inequality in Fourier space), \eqref{eq:eq:dt-K2-half} follows from the following estimate
\begin{align} \label{eq:1-Delta-K2-075}
\|(1-\Delta_x)^{-3/4-\eps} \partial_t \widetilde K_2(t, \cdot, \cdot)\|^2_{L^2}  \le \frac{C_\eps}{(1+t)^3}, \quad \forall \eps>0.
\end{align}
It suffices to show that
\begin{align} \label{eq:1-Delta-075}
\|(1-\Delta_x)^{-3/4-\eps} f(t, \cdot, \cdot)\|^2_{L^2}  \le \frac{C_\eps}{(1+t)^3}, \quad \forall \eps>0
\end{align}
with $f(t,x,y)=\partial_t u(t,x) w_N(x-y) u(t,y)$. The bound \eqref{eq:1-Delta-075} can be proved using an argument in \cite{GriMac-13}. Let us compute the Fourier transform:
\begin{align*}
\widehat{f}(t,p,q) &= \iint u(t,x)w_N(x-y)(\partial_t u)(t,y) e^{-2\pi i (p\cdot x + q\cdot y)} \d x \d y \\
&=\iint u(t,y+z) w_N(z) (\partial_t u)(t,y) e^{-2\pi i (p\cdot (y+z) + q\cdot y)} \d z\d y \\
&=\int w_N(z) \widehat{(u_z \partial_t u)}(t,p+q) e^{-2\pi ip\cdot z}\d z
\end{align*}
where $u_z(t,\cdot):=u(t,z+\cdot).$ By the Cauchy-Schwarz inequality,
\begin{align*}
\left|\widehat{f}(t,p,q) \right|^2 \le \|w_N\|_{L^1} \int |w_N(z)| \cdot| \widehat{(u_z \partial_t u)}(t,p+q)|^2 \d z.
\end{align*}
Using Plancherel's Theorem, we can estimate
\begin{align*}
& \|(1-\Delta_x)^{-3/4-\eps} f(t, \cdot, \cdot)\|^2_{L^2} = \iint (1+|2\pi p|^2)^{-3/2-2\eps}\left|\widehat{f}(t,p,q) \right|^2 \d p \d q \\
&\le \|w_N\|_{L^1}  \iiint (1+|2\pi p|^2)^{-3/2-2\eps} |w_N(z)| \cdot |\widehat{(u_z \partial_t u)}(t,p+q)|^2 \d p \d q \d z.
\end{align*}
By Lemma \ref{lem:Hartree-equation}, 
\begin{align*}
\int |\widehat{(u_z \partial_t u)}(t,p+q)|^2 \d q & = \| (u_z \partial_t u)(t,\cdot) \|_{L^2}^2 \\
& \le \|u(t,\cdot)\|_{L^\infty}^2 \|\partial_t u(t,\cdot)\|_{L^2}^2 \le \frac{C}{(1+t)^3}.
\end{align*}
Therefore,  \eqref{eq:1-Delta-075} follows from $\|w_N\|_{L^1}=\|w\|_{L^1}$ and the fact that
\begin{equation*}
\int  (1+|2\pi p|^2)^{-3/2-2\eps}  \d p \le C_\eps <\infty.
\end{equation*}
Thus \eqref{eq:1-Delta-075} holds true. By the same argument, we obtain a similar inequality with $f(t,x,y)$ replaced by $u(t,x)w_N(x-y) \partial_t u(t,y)$. Combining these two estimates, we deduce \eqref{eq:1-Delta-K2-075}. This completes the proof.
\end{proof}

Now we apply Lemmas \ref{lem:Bog-GSE} and \ref{lem:Sobolev-inverse-K2} to bound $\bH(t)$.
 
\begin{lemma} \label{lem:bHt-dbHt} For every $\eps>0$ and $\eta>0$, we have
\begin{align*}
\pm \Big( \bH(t) + \dGamma(\Delta) \Big) &\le \eta \dGamma(1-\Delta) +  \frac{C_\eps(\cN+N^{\beta+\eps})}{\eta(1+t)^3},\\
\pm \partial_t \bH(t) &\le \eta \dGamma(1-\Delta) +  \frac{C_\eps (\cN  +  N^{\beta+\eps})}{ \eta (1+t)^3},\\
\pm i[\bH(t),\cN] &\le \eta \dGamma(1-\Delta) + \frac{C_\eps (\cN  +  N^{\beta+\eps})}{ \eta (1+t)^3},
\end{align*}
as quadratic forms on $\cF(\gH)$. The constant $C_\eps$ is independent of $\eta$.
\end{lemma}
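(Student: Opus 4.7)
The plan is to split $\bH(t) + \dGamma(\Delta)$ into a number-conserving one-body part and a pairing part. Write
\[
\bH(t) + \dGamma(\Delta) = \dGamma(V(t)) + P(t),
\]
\[
V(t) = |u(t)|^2 \ast w_N - \mu_N(t) + Q(t)\widetilde{K}_1(t)Q(t),
\]
\[
P(t) = \tfrac{1}{2}\iint \bigl(K_2(t,x,y)\, a_x^* a_y^* + \overline{K_2(t,x,y)}\, a_x a_y\bigr)\,dx\,dy.
\]
Young's inequality together with $\|w_N\|_{L^1} = \|w\|_{L^1}$ and the $L^\infty$-decay of $u(t)$ from Lemma~\ref{lem:Hartree-equation} gives $\|V(t)\|_{\mathrm{op}} \le C/(1+t)^3$, hence $\pm\dGamma(V(t)) \le C\cN/(1+t)^3$. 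For the pairing $P(t)$ I apply Lemma~\ref{lem:Bog-GSE} with $K = \pm K_2(t)$ and $H = \eta(1-\Delta) + M(t)$, where $M(t)$ is a bounded shift chosen large enough to enforce the hypothesis $KH^{-1}K^* \le H$. This yields
\[
\pm P(t) \le \eta\,\dGamma(1-\Delta) + \frac{C\cN}{(1+t)^3} + \frac{1}{2\eta}\bigl\|(1-\Delta_x)^{-1/2} K_2(t,\cdot,\cdot)\bigr\|_{L^2}^2,
\]
and Lemma~\ref{lem:Sobolev-inverse-K2} bounds the last term by $C_\eps N^{\beta+\eps}/(\eta(1+t)^3)$. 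This proves the first inequality.

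For $\pm\partial_t\bH(t)$, differentiating the same decomposition produces a one-body piece $\dGamma(\partial_t V(t))$ and a pairing piece with kernel $\partial_t K_2(t)$. Since Lemma~\ref{lem:Hartree-equation} controls both $\|u(t)\|_{L^\infty}$ and $\|\partial_t u(t)\|_{L^\infty}$, one easily obtains $\|\partial_t V(t)\|_{\mathrm{op}} \le C/(1+t)^3$, disposing of the one-body piece. The pairing piece is treated exactly as $P(t)$, using the analogous bound on $\|(1-\Delta_x)^{-1/2}\partial_t K_2\|_{L^2}^2$ that is also part of Lemma~\ref{lem:Sobolev-inverse-K2}.

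For $\pm i[\bH(t),\cN]$ the key simplification is that $\dGamma(h(t))$ commutes with $\cN$, so only the pairing part contributes. The canonical commutation relations give $[a_x^* a_y^*,\cN] = -2\,a_x^* a_y^*$, and a short computation shows
\[
i[\bH(t),\cN] = \tfrac{1}{2}\iint\bigl(\widetilde K(t,x,y)\,a_x^* a_y^* + \overline{\widetilde K(t,x,y)}\,a_x a_y\bigr)\,dx\,dy, \quad \widetilde K(t) := -2i\, K_2(t).
\]
This is structurally another pairing term whose kernel has the same modulus as $K_2(t)$, so the same application of Lemmas~\ref{lem:Bog-GSE} and~\ref{lem:Sobolev-inverse-K2} delivers the third inequality.

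The one technical nuisance is verifying the hypothesis $KH^{-1}K^* \le H$ in Lemma~\ref{lem:Bog-GSE} uniformly for small $\eta > 0$; this forces the shift $M(t) \ge \|K_2(t)\|$, whose contribution must be absorbed into the $\cN/(1+t)^3$-term on the right-hand side. The substantive improvement over the naive bound \eqref{eq:bound-paring-dG1} is entirely contained in Lemma~\ref{lem:Sobolev-inverse-K2}, where the factor $N^{3\beta}$ from \eqref{eq:L2-K2-N3beta} is replaced by $N^{\beta+\eps}$, and this is precisely what enables the extension of the admissible range to $\beta < 1/2$.
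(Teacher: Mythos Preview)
Your approach is essentially identical to the paper's: the same splitting into $\dGamma(h+\Delta)$ plus the pairing term, the same application of Lemma~\ref{lem:Bog-GSE} with $H=\eta(1-\Delta)+\text{shift}$ and $K=\pm K_2$ (respectively $\pm\partial_t K_2$, $\mp 2iK_2$), and the same use of Lemma~\ref{lem:Sobolev-inverse-K2} for the Hilbert--Schmidt norm. The only cosmetic difference is that the paper takes the shift to be $\eta^{-1}\|K_2\|^2$ (so the resulting $\cN$-term automatically carries the factor $\eta^{-1}$ as in the stated inequality), whereas your choice $M(t)=\|K_2(t)\|$ produces $C\cN/(1+t)^3$ without the $\eta^{-1}$; this is harmless since one recovers the stated form via $C(1+t)^{-3}\le \eta + C\eta^{-1}(1+t)^{-6}$ and $\cN\le\dGamma(1-\Delta)$.
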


\begin{proof} First, we consider  
$$
\bH(t) + \dGamma(\Delta)= \dGamma(h+\Delta) + \frac{1}{2} \iint \Big( K_2(t,x,y) a_x^* a_y^* + \overline{K_2(t,x,y)}a_x a_y \Big) \d x \d y.
$$
Recall that 
$$h+\Delta = |u(t,\cdot)|^2\ast w_N -\mu_N(t) + Q(t) \widetilde K_1(t) Q(t),$$
where $\widetilde K_1(t)$ is the operator with kernel $\widetilde K_1(t)=u(t,x)w_N(x-y)\overline{u(t,y)}$. Using Lemma \ref{lem:Hartree-equation}, we have
\begin{align} \label{eq:K1-norm-1}
&\| |u(t,\cdot)|^2*w_N \|_{L^\infty} \le \|u(t,\cdot)\|_{L^\infty}^2 \|w_N\|_{L^1} \le \frac{C}{(1+t)^3},\\
&\mu_N(t) = \frac{1}{2} \int |u(t,x)|^2 (|u(t,\cdot)|^2*w_N)(x) \d x  \le \frac{C}{(1+t)^3}.\label{eq:K1-norm-2}
\end{align}
Moreover, $\| Q(t) \widetilde K_1(t) Q(t) \| \le \| \widetilde K_1(t) \|$ and  
\begin{align} \label{eq:K1-norm}
&\| \widetilde K_1(t) \| = \sup_{ \|f\|_{L^2}=1} \left| \iint \overline{f(x)} u(t,x) w_N(x-y) \overline{u(t,y)}f(y) \d x \d y \right| \\
& \le \sup_{\|f\|_{L^2}=1} \|u(t,\cdot)\|_{L^\infty}^2 \iint \frac{|f(x)|^2+|f(y)|^2}{2} w_N(x-y) \d x \d y  \le \frac{C}{(1+t)^3}.\nn
\end{align}
From \eqref{eq:K1-norm-1}, \eqref{eq:K1-norm-2} and \eqref{eq:K1-norm} and the triangle inequality, we get 
\bq \label{eq:norm-h}
\| h + \Delta \|  \le  \frac{C}{(1+t)^3}.
\eq
Similarly to \eqref{eq:K1-norm}, we can bound the operator $K_2(t):\gH^*\to \gH$ as 
\bq \label{eq:norm-K2}
\|K_2(t)\| \le \frac{C}{(1+t)^3}.
\eq
Now we apply Lemma \ref{lem:Bog-GSE} with $H=\eta(1-\Delta)+\eta^{-1}\|K_2\|^2$ and $K=\pm K_2$, where $\eta>0$ is arbitrary. Since $H \ge \|K\|$, we have
$$H \ge \|K\| \ge K \|K\|^{-1}K^* \ge KH^{-1}K .$$
Moreover, using $H \ge \eta(1-\Delta)$ and Lemma \ref{lem:Sobolev-inverse-K2}, we get
$$
\|H^{-1/2}K\|_{\rm HS}^2 \le \eta^{-1}\|(1-\Delta)^{-1/2}K_2\|_{\rm HS}^2 \le \frac{C_\eps N^{\beta+\eps}}{\eta(1+t)^3}.
$$
Therefore, Lemma \ref{lem:Bog-GSE} implies that 
\begin{align} \label{eq:pairing-dGD}
&\pm \frac{1}{2} \iint \Big( K_2(t,x,y) a_x^* a_y^* + \overline{K_2(t,x,y)}a_x a_y \Big) \d x \d y \nn \\
& \le \eta\dGamma(1-\Delta) + \eta^{-1}\|K_2\|^2 \cN + \frac{C_\eps N^{\beta+\eps}}{\eta(1+t)^3}.
\end{align}
From \eqref{eq:norm-h}, by using the Cauchy-Schwarz inequality $C(1+t)^{-3} \le \eta + C\eta^{-1}(1+t)^{-6}$ we get
$$
\pm \dGamma(h+\Delta) \le \eta \cN + \frac{C \cN}{\eta(1+t)^{6}}
$$
for all $\eta>0$. Combining this with \eqref{eq:pairing-dGD} (and the obvious bound $\cN\le \dGamma(1-\Delta)$) we conclude that
$$
\pm \Big( \bH(t) + \dGamma(\Delta) \Big) \le \eta \dGamma(1-\Delta) +  \frac{C_\eps (\cN+N^{\beta+\eps})}{\eta(1+t)^3}.
$$

The bound on $\partial_t \bH(t)$ is obtained by the same way. Indeed, by Lemma \ref{lem:Hartree-equation},
\begin{align} \label{eq:der-K1-1}
& \| \partial_t (|u(t,\cdot)|^2\ast w_N)  \|_{L^\infty} \le \| \partial_t  |u(t,\cdot)|^2 \|_{L^\infty} \|w_N\|_{L^1} \le \frac{C}{(1+t)^3},\\
&| \partial_t \mu_N(t) | \le 2 \iint |\partial_t (|u(t,x)|^2)| w_N(x-y) |u(t,y)|^2 \d x \d y \le \frac{C}{(1+t)^3}, \label{eq:der-K1-2}\\
\label{eq:der-K1}&\| \partial_t (Q(t) \widetilde K_1(t) Q(t) ) \| \le  \|\partial_t \widetilde K_1(t)\| + 2 \|\partial_t Q(t)\| \cdot \|\widetilde K_1(t)\|  \le \frac{C}{(1+t)^3}.
\end{align}
Similarly to \eqref{eq:der-K1}, we can bound the operator $\partial_t K_2(t): \gH^*\to \gH$ as 
\bq \label{eq:norm-dt-K2}
\| \partial_t K_2(t)\| \le \frac{C}{(1+t)^3}.
\eq
Then we apply Lemma \ref{lem:Bog-GSE} with $H=\eta(1-\Delta) + \eta^{-1}\|\partial_t K_2\|^2$ and $K= \pm \partial_t K_2$ and obtain 
\begin{equation*}
\pm \partial_t \bH(t) \le \eta \dGamma(1-\Delta) +  \frac{C_\eps( \cN  + N^{\beta+\eps})}{ \eta (1+t)^3}. 
\end{equation*}

Finally, since $[a_x^*a^*_y,\cN]= -2 a_x^* a_y^*$, we have
\begin{align*}
i[\bH(t),\cN] &= \frac{i}{2} \iint \Big( K_2(t,x,y) [a_x^* a_y^*,\cN]  + \overline{K_2(t,x,y)}[a_x a_y,\cN] \Big) \d x \d y \\
&= - \iint \Big( iK_2(t,x,y) a_x^* a_y^* + \overline{iK_2(t,x,y)}a_x a_y \Big) \d x \d y
\end{align*}
Using Lemma \ref{lem:Bog-GSE} again, we obtain \eqref{eq:pairing-dGD} with $K_2$ replaced by $-2iK_2$, namely
\begin{equation*}
\pm i[\bH(t),\cN] \le \eta \dGamma(1-\Delta) + \frac{C_\eps (\cN + N^{\beta +\eps})}{\eta(1+t)^3}
\end{equation*}
for all $\eta>0$. This completes the proof.
\end{proof}

\begin{remark} \label{rmk:ground-state} Assume that $\Phi$ is the ground state of the quadratic Hamiltonian
$$ \bH_V(0) = \dGamma(h(0)+V) + \frac{1}{2} \iint \Big( K_2(0,x,y) a_x^* a_y^* + \overline{K_2(0,x,y)}a_x a_y \Big) \d x \d y,$$
where $V(x)$ is an appropriate trapping potential which ensures that $h(0)+V \ge \eta >0$ (in particular, this implies that $h(0)+V \ge C_\eta^{-1} (1-\Delta)$ for some constant $C_\eta>0$). By Lemma \ref{lem:Bog-GSE}, we know that 
\begin{align*}
\bH_V(0) &\ge \frac{1}{2}\dGamma(h(0)+V) + \frac{1}{2} \iint \Big( K_2(0,x,y) a_x^* a_y^* + \overline{K_2(0,x,y)}a_x a_y \Big) \d x \d y \\
&\ge - C \|(1-\Delta_x)^{-1/2} K_2(0,\cdot,\cdot)\|_{L^2}^2  \ge -C_\eps N^{\beta+\eps}.
\end{align*}
In particular, $\bH_V(0)$ is bounded from below and it can be diagonalized by a Bogoliubov transformation. Moreover, $\Phi$ is a quasi-free state 
and 
$$ \langle \Phi, \cN \Phi \rangle \le C  \|(1-\Delta_x)^{-1} K_2(0,\cdot,\cdot)\|_{L^2}^2 \le C.$$
Here the first bound is a consequence of \cite[Theorem 1 (ii)]{NamNapSol-16}.

Since the ground state energy of $\bH_V(0)$ is always non-positive (see \cite[Theorem 2.1 (i)]{LewNamSerSol-15}), we have 
$$
0 \ge \langle \Phi, \bH_V(0) \Phi\rangle \ge \frac{1}{2}\langle \Phi, \dGamma(h(0)+V) \Phi \rangle - C_\eps N^{\beta+\eps}.
$$
Combining with $h(0)+V \ge C_\eta^{-1} (1-\Delta)$, we obtain
$$\langle \Phi, \dGamma(1-\Delta) \Phi \rangle \le C_\eps N^{\beta+\eps}, \quad \forall \eps>0.$$ 
This motivates the assumptions on $\Phi(0)$ in Theorem \ref{thm:main}.
\end{remark}

We are ready to give 

\begin{proof}[Proof of Lemma \ref{lem:bH-kinetic}] From the Bogoliubov equation \eqref{eq:Bogoliubov-equation}, we have
$$
\partial_t \big\langle \Phi(t), \bH (t) \Phi(t)  \big\rangle = \big\langle \Phi(t), \partial_t \bH (t)  \Phi(t)  \big\rangle
$$
which implies that 
\begin{align} \label{eq:Phit-Grw}
\big\langle \Phi(t), \bH (t) \Phi(t)  \big\rangle - \big\langle \Phi(0), \bH (0) \Phi(0)  \big\rangle = \int_0^t \big\langle \Phi(s), \partial_s \bH (s) \Phi(s)  \big\rangle \d s. 
\end{align}
From the first bound in Lemma \ref{lem:bHt-dbHt} with $\eta=1/2$, we get
\begin{align*} 
\pm \big\langle \Phi(t), (\bH (t) + \dGamma(\Delta) ) \Phi(t)  \big\rangle  &\le  \frac{1}{2} \big \langle \Phi(t), \dGamma(1-\Delta)  \Phi(t) \big\rangle \\
&\qquad \qquad  + C_\eps \Big( \big \langle \Phi(t), \cN  \Phi(t) \big\rangle +  N^{\beta+\eps} \Big). 
\end{align*}
This implies 
\begin{align} \label{eq:kin-Phi-1}
\big\langle \Phi(0), \bH (0) \Phi(0)  \big\rangle &\le \frac{3}{2}\big \langle \Phi(0), \dGamma(1-\Delta)  \Phi(0) \big\rangle + \nn\\
&\qquad  + C_\eps \Big( \big \langle \Phi(0), \cN  \Phi(0) \big\rangle + N^{\beta+\eps} \Big) \le C_\eps N^{\beta+\eps}
\end{align}
(where we have used the assumption on $\Phi(0)$) and 
\begin{align} 
\big\langle \Phi(t), \bH (t) \Phi(t)  \big\rangle  \ge  \frac{1}{2} \big \langle \Phi(t), \dGamma(1-\Delta)  \Phi(t) \big\rangle - C_\eps \Big( \big \langle \Phi(t), \cN  \Phi(t) \big\rangle +  N^{\beta+\eps} \Big). \label{eq:kin-Phi-2}
\end{align}
Next, from  the second bound in Lemma \ref{lem:bHt-dbHt} with $\eta=(1+t)^{-3/2}$ we have    
\begin{align}
\big\langle \Phi(t), \partial_t \bH (t)\Phi(t)  \big\rangle \le C_\eps  \frac{ \big\langle \Phi(t), \dGamma(1-\Delta) \Phi(t) \big\rangle+ N^{\beta+\eps}}{(1+t)^{3/2}} . \label{eq:kin-Phi-3}
\end{align}
Inserting \eqref{eq:kin-Phi-1}, \eqref{eq:kin-Phi-2} and \eqref{eq:kin-Phi-3} into \eqref{eq:Phit-Grw}, we obtain
\begin{align} \label{eq:Phit-Grw-1}
\big \langle \Phi(t), \dGamma(1-\Delta)  \Phi(t) \big\rangle & \le  C_\eps \int_0^t \frac{\big \langle \Phi(s), \dGamma(1-\Delta)  \Phi(s) \big\rangle}{(1+s)^{3/2}} \d s \nn\\ &\qquad \qquad + C_\eps \Big( \big \langle \Phi(t), \cN  \Phi(t) \big\rangle + N^{\beta+\eps} \Big). 
\end{align}

Now instead of using the bound on $\big \langle \Phi(t), \cN  \Phi(t) \big\rangle$ in Lemma \ref{lem:Bogoliubov-equation}, we present another argument which will be used again to deal with the many-body Schr\"odinger evolution in Section \ref{sec:Bogoliubov}.  From the Bogoliubov equation \eqref{eq:Bogoliubov-equation} and the third bound in Lemma \ref{lem:bHt-dbHt} with $\eta=(1+t)^{-3/2}$, it follows that
\begin{align*}
\partial_t \langle \Phi(t), \cN \Phi(t) \rangle = \langle \Phi(t), i[\bH(t),\cN] \Phi(t) \rangle \le C_\eps  \frac{ \big\langle \Phi(t), \dGamma(1-\Delta) \Phi(t) \big\rangle+ N^{\beta+\eps}}{(1+t)^{3/2}}.
\end{align*}
Integrating over $t$ and using the assumption on $\Phi(0)$ we have
\begin{align*}
\langle \Phi(t), \cN \Phi(t) \rangle \le C_\eps \int_0^t \frac{ \big\langle \Phi(s), \dGamma(1-\Delta) \Phi(s) \big\rangle}{(1+s)^{3/2}} + C_\eps N^{\beta+\eps}.
\end{align*}
Inserting the latter inequality into the right side of \eqref{eq:Phit-Grw-1} we obtain
\begin{align} \label{eq:Phit-Grw-2}
\big \langle \Phi(t), \dGamma(1-\Delta)  \Phi(t) \big\rangle & \le C_\eps \int_0^t \frac{\big \langle \Phi(s), \dGamma(1-\Delta)  \Phi(s) \big\rangle}{(1+s)^{3/2}} \d s + C_\eps N^{\beta+\eps}. 
\end{align}
Now we define 
$$ f(t) := \int_0^t \frac{\big\langle \Phi(s), \dGamma(1-\Delta) \Phi(s)  \big\rangle}{(1+s)^{3/2}} \d s + N^{\beta+\eps}$$
and rewrite \eqref{eq:Phit-Grw-2} as   
$$
\frac{\d}{\d t} \log (f(t)) = \frac{f'(t)}{f(t)} \le  \frac{C_\eps}{(1+t)^{3/2} }.
$$
Integrating over $t$ and using the fact that $(1+t)^{-3/2}$ is integrable on $(0,\infty)$, we get  $f(t)\le C_\eps N^{\beta+\eps}$. The desired result then follows from \eqref{eq:Phit-Grw-2}.
\end{proof}


\section{Bogoliubov approximation} \label{sec:Bogoliubov}

Recall that any vector $\Psi\in \gH^N$ can be decomposed uniquely as 
\begin{equation*}
\Psi=\sum_{n=0}^N u(t)^{\otimes (N-n)} \otimes_s \psi_n  = \sum_{n=0}^N \frac{(a^*(u(t)))^{N-n}}{\sqrt{(N-n)!}} \psi_n
\end{equation*}
with $\psi_n \in \gH_+(t)^{n}$, see ~\cite[Sec. 2.3]{LewNamSerSol-15}. This gives rise the  unitary operator
\begin{equation*}
\begin{array}{cccl}
U_{N}(t): & \gH^N & \to & \displaystyle \cF_+^{\le N}(t):=\bigoplus_{n=0}^N \gH_+(t)^n \\[0.3cm]
 & \Psi & \mapsto & \psi_0\oplus \psi_1 \oplus\cdots \oplus \psi_N.
\end{array}
\end{equation*}
Following \cite{LewNamSch-15}, we reformulate the Schr\"odinger equation $\Psi_N(t)=e^{-itH_N} \Psi(0)$ by introducing 
$$\Phi_N(t):=U_N(t) \Psi_N(t)$$
which belongs to $\cF_+^{\le N}(t)$ and satisfies the equation
\bq \label{eq:eq-PhiNt}
\left\{
\begin{aligned}
i \partial_t \Phi_N(t)  &=  \widetilde H_N (t)   \Phi_N(t), \\
 \Phi_N(0) & = \1^{\le N} \Phi(0).
\end{aligned}
\right.
\eq
Here $\1^{\le N}$ is the projection onto $\cF^{\le N}=\mathbb{C} \oplus \gH \oplus \cdots \oplus \gH^N$ and 
$$
\widetilde H_N (t)=  \1^{\le N} \Big[ \bH(t) + \frac{1}{2}\sum_{j=0}^4 ( R_{j} + R_j^*) \Big] \1^{\le N}
$$
with 
\begin{align*}
R_{0}&=R_0^*= \d\Gamma(Q(t)[w_N*|u(t)|^2+ \widetilde{K}_1(t) -\mu_N(t)]Q(t))\frac{1-\cN}{N-1},\\
R_{1}&=-2\frac{\cN\sqrt{N-\cN}}{N-1} a(Q(t)[w_N*|u(t)|^2]u(t)),\\
R_{2}&= \iint  K_2(t,x,y) a^*_x a^*_y \d x \d y \left(\frac{\sqrt{(N-\cN)(N-\cN-1)}}{N-1}-1\right),\\
R_{3}& =\frac{\sqrt{N-\cN}}{N-1}\iiiint( 1 \otimes Q(t) w_N Q(t)\otimes Q(t))(x,y;x',y')\times \\
& \qquad \qquad \qquad \qquad \qquad \qquad \qquad  \times \overline{u(t,x)} a^*_y a_{x'} a_{y'} \,\d x \d y \d x' \d y',\\
R_{4}&=R_4^*= \frac{1}{2(N-1)}\iiiint({Q(t)}\otimes{Q(t)}w_N Q(t)\otimes Q(t))(x,y;x',y')\times \\
&  \qquad \qquad \qquad \qquad \qquad \qquad \qquad \qquad \qquad   \times a^*_x a^*_y a_{x'} a_{y'} \,\d x \d y \d x' \d y'.
\end{align*}
Here, in $R_0$ and $R_1$ we write $w_N$ for the function $w_N(x)$, while in $R_3$ and $R_4$ we write $w_N$ for the two-body multiplication operator $w_N(x-y)$.

In order to compare $\Phi_N(t)$ with the Bogoliubov dynamics $\Phi(t)$, we need to bound all error terms $R_j$'s. In \cite[Prop. 3]{NamNap-15}, we proved that 
$$ (R_j+R_j^*) \1^{\le N} (R_j+R_j^*) \le C ( N^{6\beta-2} + N^{3\beta-1}) (1+\cN)^4.$$
Unfortunately, this bound is only useful when $\beta<1/3$. In the present paper, we will derive several improved estimates. Let us start with

\begin{lemma}\label{lem:Rj} We have the quadratic form estimates on $\cF_+^{\le N}$: 
\begin{align*}
\pm (R_j+R_j^*) \le \eta \Big( R_4 + \frac{\cN^2}{N} \Big)+ \frac{C (1+\cN)}{\eta(1+t)^3}, \quad \forall \eta>0, \,\forall j=0,1,2,3,
\end{align*}
and
$$ 0\le R_4 \le CN^{3\beta-1} \cN^2, \quad R_4\le CN^{\beta-1} \dGamma(-\Delta) \cN.$$
Here the constant $C$ depends only on $\|u(0,\cdot)\|_{W^{1,\ell}}$ (more precisely, on $\kappa_0$ in the condition $\| u(0,\cdot)\|_{W^{\ell,1}} \le \kappa_0$).  
\end{lemma}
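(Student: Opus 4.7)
The plan is to establish the two direct bounds on $R_4$ first, and then bound each $R_j$, $j=0,1,2,3$, separately, using Cauchy-Schwarz adapted to the algebraic structure of each term together with the pointwise and Sobolev-type estimates on $w_N$, $u(t,\cdot)$, $K_2(t)$ and $\widetilde K_1(t)$ already assembled in the proof of Lemma \ref{lem:bHt-dbHt}.

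For $R_4$: since the $Q(t)$ projections act as the identity on each sector of $\cF_+^{\le N}(t)$, $\langle \Psi, R_4 \Psi\rangle$ reduces on the $n$-particle sector to $\frac{1}{2(N-1)}\sum_{i\ne j}\langle \Psi, w_N(x_i-x_j)\Psi\rangle$. The pointwise estimate $\|w_N\|_{L^\infty}=N^{3\beta}\|w\|_{L^\infty}$ then yields $R_4\le CN^{3\beta-1}\cN(\cN-1)$ at once. For the kinetic-type bound I would use the two-body Sobolev inequality
\[
\int_{\R^3} w_N(x-y)|f(x,y)|^2\,\d x \le \|w_N\|_{L^{3/2}}\|f(\cdot,y)\|_{L^6(\R^3)}^2 \le C N^\beta\|\nabla_x f(\cdot,y)\|_{L^2(\R^3)}^2,
\]
which combines H\"older's inequality with $\|w_N\|_{L^{3/2}}=N^\beta\|w\|_{L^{3/2}}$ and the Sobolev embedding $\dot H^1(\R^3)\hookrightarrow L^6(\R^3)$. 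Summing the resulting estimate over the $(n-1)$ pairs with $j$ fixed and using $(n-1)/(N-1)\le n/N$ on the $n$-sector then produces $R_4 \le CN^{\beta-1}\dGamma(-\Delta)\cN$.

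For the remaining terms I would exploit the algebraic structure of each $R_j$: (i) $R_0 = \dGamma(A(t))(1-\cN)/(N-1)$ is number-preserving with $\|A(t)\|\le C/(1+t)^3$ (from \eqref{eq:K1-norm-1}--\eqref{eq:K1-norm}), hence $\pm R_0\le (C/(1+t)^3)\cN^2/N$, and the target form follows by AM-GM using $\cN\le N$ and $1/(1+t)^6\le 1/(1+t)^3$. (ii) $R_1 = -2(\cN\sqrt{N-\cN}/(N-1))\,a(f)$ with $f=Q(t)[w_N\ast|u(t)|^2]u(t)$ and $\|f\|_{L^2}\le C/(1+t)^3$; Cauchy-Schwarz based on $\|a(f)\Psi\|\le\|f\|\cdot\|\cN^{1/2}\Psi\|$ and the prefactor bound $\cN\sqrt{N-\cN}/(N-1)\le C\cN/\sqrt N$ gives the required estimate. (iii) For $R_2 = B\,g(\cN)$ with $B=\iint K_2 a^*a^*$ and $g(\cN) := \sqrt{(N-\cN)(N-\cN-1)}/(N-1)-1$, the key observation is $|g(\cN)|\le C(\cN+1)/N$ on $\cF^{\le N}$; an operator-weighted Cauchy-Schwarz combining this $\cN/N$ smallness with the Hilbert-Schmidt bound $\|K_2\|_{\rm HS}^2\le CN^{3\beta}/(1+t)^3$ recovers the desired $\eta(R_4+\cN^2/N)+C(1+\cN)/(\eta(1+t)^3)$ form.

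The main obstacle will be $R_3$, the cubic term. Using $a(u(t))=0$ on $\cF_+^{\le N}(t)$ (since any $\Psi$ in $\cF_+^{\le N}(t)$ is built from modes orthogonal to $u(t)$), one kills the Hartree-type piece in the kernel decomposition of $(1\otimes Q w_N Q\otimes Q)$, so that modulo controllable $Q(t)$-corrections $R_3$ reduces to $\frac{\sqrt{N-\cN}}{N-1}\iint w_N(x-y)\,\overline{u(t,x)}\,a^*_y a_x a_y\,\d x\d y$. The Cauchy-Schwarz plan is
\[
|\langle\Psi,R_3\Psi\rangle|\le \frac{\sqrt{N}}{N-1}\iint w_N(x-y)\,|u(t,x)|\,\|a_y\Psi\|\,\|a_xa_y\Psi\|\,\d x\d y,
\]
followed by Cauchy-Schwarz in $(x,y)$: the first factor is at most $(\|w_N\ast|u(t)|^2\|_{L^\infty}\langle\Psi,\cN\Psi\rangle)^{1/2}\le (C\langle\Psi,\cN\Psi\rangle/(1+t)^3)^{1/2}$, while the second factor equals $\langle\Psi,\iint w_N a^*a^*aa\,\Psi\rangle^{1/2}=(2(N-1)\langle\Psi,R_4\Psi\rangle)^{1/2}$ up to lower-order normal-ordering residues. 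The $N$-prefactors combine to $\sqrt{N/(N-1)}=O(1)$, and a final AM-GM delivers $\pm(R_3+R_3^*)\le \eta R_4 + C\cN/(\eta(1+t)^3)$. The delicate point is ensuring that the $Q(t)$-bookkeeping errors in $R_3$ and the normal-ordering residues are indeed absorbable into the $(1+\cN)/(\eta(1+t)^3)$ budget, rather than requiring further $R_4$ or $\cN^2/N$ headroom.
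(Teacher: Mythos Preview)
Your treatment of $R_0$, $R_1$, $R_3$, and $R_4$ is correct and matches the paper's approach closely. Your worries at the end are unfounded: on $\cF_+^{\le N}(t)$ every $Q(t)$ in $R_3$ (and $R_4$) acts as the identity because $a(u(t))\Phi=0$, so there is no ``$Q(t)$-bookkeeping error'' to absorb; and $\iint w_N(x-y)a_x^*a_y^*a_xa_y\,\d x\,\d y = 2(N-1)R_4$ is an exact identity on $\cF_+^{\le N}(t)$, with no normal-ordering residue.

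The genuine gap is your treatment of $R_2$. Using the Hilbert--Schmidt bound $\|K_2\|_{\rm HS}^2\le CN^{3\beta}/(1+t)^3$ does not work: a Cauchy--Schwarz based on $\|K_2\|_{\rm HS}$ gives at best
\[
|\langle\Phi,R_2\Phi\rangle|\le \frac{C N^{3\beta/2}}{(1+t)^{3/2}}\,\langle\Phi,\cN(\cN-1)\Phi\rangle^{1/2}\cdot\frac{C}{N}\langle\Phi,(1+\cN)^2\Phi\rangle^{1/2},
\]
and any AM--GM splitting of this leaves a term of size $N^{3\beta-1}\cN^2$ (or worse) that cannot be controlled by $\eta(R_4+\cN^2/N)+C(1+\cN)/(\eta(1+t)^3)$, since there is no lower bound of the form $R_4\ge c\,N^{3\beta-1}\cN^2$. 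The factor $N^{3\beta}$ coming from $\|K_2\|_{\rm HS}^2$ is simply unabsorbable in the target form.

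The correct route --- and what the paper does --- is the same trick you already use for $R_3$: replace $K_2$ by $\widetilde K_2(t,x,y)=u(t,x)w_N(x-y)u(t,y)$ using $\Phi\in\cF_+(t)$, and then apply Cauchy--Schwarz with weight $w_N$ so that the potential lands on $R_4$:
\[
\iint |u(t,x)|w_N(x-y)|u(t,y)|\,\|a_xa_y\Phi\|\,\d x\,\d y
\le \Big(\iint |u|^2 w_N |u|^2\Big)^{1/2}\Big(\iint w_N\|a_xa_y\Phi\|^2\Big)^{1/2}.
\]
The first factor is $O((1+t)^{-3/2})$ with \emph{no} $N$-growth (it is $\sqrt{2\mu_N(t)}$), and the second is $\sqrt{2(N-1)}\,\langle\Phi,R_4\Phi\rangle^{1/2}$; the resulting $\sqrt{N}$ cancels the $N^{-1/2}$ from $\|g(\cN)\Phi\|\le CN^{-1/2}\langle\Phi,(1+\cN)\Phi\rangle^{1/2}$, yielding $|\langle\Phi,R_2\Phi\rangle|\le C(1+t)^{-3/2}\langle\Phi,R_4\Phi\rangle^{1/2}\langle\Phi,(1+\cN)\Phi\rangle^{1/2}$ and hence the desired bound after AM--GM.
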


\begin{proof} Let us go term by term.

\noindent
$\boxed{j=0}$ Recall that
$$
R_0 = \d\Gamma\Big( Q(t)[w_N*|u(t)|^2+\widetilde{K}_1(t) -\mu_N(t)]Q(t) \Big)\frac{1-\cN}{N-1}.
$$
From the operator bounds in \eqref{eq:K1-norm-1}-\eqref{eq:K1-norm-2}-\eqref{eq:K1-norm}, we have
$$
\pm \d\Gamma\Big( Q(t)[w_N*|u(t)|^2+\widetilde{K}_1(t) -\mu_N(t)]Q(t) \Big) \le \frac{C \cN}{(1+t)^3}.
$$
Since the left side of the latter inequality commutes with $\cN$,  we get
\begin{align} \label{eq:R0-final}
\pm R_0 \le \frac{C\cN^2}{N(1+t)^3} &\le \eta \frac{\cN^2}{N} + \eta^{-1} \frac{C\cN^2}{N(1+t)^6} \nn\\
&\le  \eta \frac{\cN^2}{N} + \eta^{-1} \frac{C\cN}{(1+t)^6}, \quad \forall \eta>0.
\end{align}
In the last inequality, we have used the fact that $\cN \le N$ on $\cF_{+}^{\le N}(t)$.

\medskip

\noindent 
$\boxed{j=1}$ For every $\Phi\in \cF_+^{\le N}(t)$, by the Cauchy-Schwarz inequality we have
\begin{align*}
\left| \langle \Phi,  R_{1}\Phi \rangle \right| &= \frac{2}{N-1} \left| \left\langle \Phi, \cN\sqrt{N-\cN} a\Big( Q(t)[w_N*|u(t)|^2]u(t)  \Big) \Phi \right\rangle \right| \\
& \le \frac{2}{N-1} \| \cN\sqrt{N-\cN} \Phi\| \Big\| a \Big( Q(t)[w_N*|u(t)|^2]u(t) \Big) \Phi \Big\|.
\end{align*}
Now we use the elementary inequality $a^*(v)a(v)\le \|v\|_{L^2}^2 \cN$ and 
\begin{align*}
\Big\|  Q(t)[w_N*|u(t)|^2]u(t) \Big\|_{L^2}  \le \| [w_N*|u(t)|^2] u(t)\|_{L^2} \le \frac{C}{(1+t)^{3}}.
\end{align*}
Here the last estimate is \eqref{eq:K1-norm-1}. Thus
\begin{align}\label{eq:R1-final-0}
\left| \langle \Phi,  R_{1} \Phi \rangle \right| &\le \frac{C}{N(1+t)^{3/2}} \langle \Phi, \cN^2(N-\cN)  \Phi\rangle^{1/2} \langle \Phi, \cN \Phi \rangle^{1/2} \nn\\
&\le  \frac{\eta}{N} \langle \Phi, \cN^2 \Phi \rangle + \frac{C}{\eta(1+t)^3}  \langle \Phi, \cN \Phi \rangle, \quad \forall \eta>0.
\end{align}
In the last estimate we have used $0\le N-\cN \le N$ on $\cF_+^{\le N}(t)$. Consequently,
\begin{align*}
\pm \langle \Phi,  (R_{1}+R_1^*) \Phi \rangle = \pm 2 \Re \langle \Phi, R_1 \Phi \rangle \le  \frac{\eta}{N} \langle \Phi, \cN^2 \Phi \rangle + \frac{C}{\eta(1+t)^3}  \langle \Phi, \cN \Phi \rangle
\end{align*}
for all $\Phi\in \cF_+^{\le N}(t)$. Therefore, 
\bq \label{eq:R1-final}
\pm (R_1+R_1^*) \le \eta \frac{\cN^2}{N} + \eta^{-1} \frac{C\cN}{(1+t)^3}, \quad \forall \eta>0.
\eq

\noindent
$\boxed{j=2}$ For every $\Phi \in \cF_+^{\le N}(t)$, we have 
\begin{align*}
\langle \Phi, R_2 \Phi \rangle &= \iint  K_2(t,x,y) \Big\langle \Phi, a^*_x a^*_y  \Big(\frac{\sqrt{(N-\cN)(N-\cN-1)}}{N-1}-1\Big) \Phi \Big\rangle \d x \d y \\
&= \iint \widetilde K_2(t,x,y) \Big\langle \Phi, a^*_x a^*_y  \Big(\frac{\sqrt{(N-\cN)(N-\cN-1)}}{N-1}-1\Big) \Phi \Big\rangle \d x \d y.
\end{align*}
Here we can replace $K_2(t)=Q(t)\otimes Q(t) \widetilde K_2(t)$ by $\widetilde K_2(t)$, namely replace the projection $Q(t)$ by the identity, because $\Phi$ belongs to the excited Fock space $\cF_+(t)$ (putting differently, this is because $a(u)\Phi=0$). By the Cauchy-Schwarz inequality, we can estimate 
\begin{align*} 
 \left| \langle \Phi, R_2 \Phi \rangle \right| & \le \iint |u(t,x)| w_N(x-y) |u(t,y)| \| a_x a_y \Phi \|  \times \nn\\
&\qquad \qquad \qquad \times  \left\| \Big(\frac{\sqrt{(N-\cN)(N-\cN-1)}}{N-1}-1\Big) \Phi  \right\| \d x \d y.
\end{align*}
Using $0\le \cN \le N$ on $\Phi\in \cF_+^{\le N}(t)$, it is straightforward to see that
\bq \label{eq:sqrt-N-N-1}
\left\| \Big(\frac{\sqrt{(N-\cN)(N-\cN-1)}}{N-1}-1\Big) \Phi \right\|   \le \frac{C}{\sqrt{N}} \langle \Phi, (1+\cN) \Phi\rangle^{1/2}.
\eq
Moreover, by the Cauchy-Schwarz inequality again,
\begin{align*} 
&\iint |u(t,x)| w_N(x-y) |u(t,y)| \| a_x a_y \Phi \|  \d x \d y \nn\\
&\le \left( \iint |u(t,x)|^2 w_N(x-y) |u(t,y)|^2 \d x \d y \right)^{1/2} \nn \\
&\qquad  \times  \left( \iint w_N(x-y) \| a_x a_y \Phi \|^2 \d x \d y \right)^{1/2}   \le \frac{C\sqrt{N}}{(1+t)^{3/2}} \langle \Phi, R_4 \Phi\rangle^{1/2}.
\end{align*}
In the last estimate we have used \eqref{eq:K1-norm-2} and the definition of $R_4$. Thus
\begin{align}
\left| \langle \Phi, R_2\Phi \rangle \right| &\le \frac{C}{(1+t)^{3/2}} \langle \Phi, R_4 \Phi\rangle^{1/2} \langle \Phi, (1+\cN) \Phi\rangle^{1/2} \nn \\
& \le \eta \langle \Phi, R_4 \Phi\rangle + \frac{C}{\eta (1+t)^3} \langle \Phi, (1+\cN) \Phi\rangle, \quad \forall \eta>0. \label{eq:R2-final-0}
\end{align}
Consequently,
\bq \label{eq:R2-final}
\pm (R_2+R_2^*)  \le \eta R_4 + \frac{C (1+\cN)}{\eta (1+t)^3}  , \quad \forall \eta>0.
\eq

\medskip

\noindent
$\boxed{j=3}$ For all $\Phi\in \cF_+^{\le N}(t)$, by using the simplification involving the projection $Q(t)$ as above and the Cauchy-Schwarz inequality we have 
\begin{align} \label{eq:R3-final-0}  
\left| \langle \Phi, R_3 \Phi \rangle \right| &=  \frac{1}{N-1}  \left| \iint  w_N(x-y) \overline{u(t,x)} \Big \langle \Phi,\sqrt{N-\cN} a^*_y a_{y} a_x \Phi \Big\rangle \,\d x \d y \right| \nn\\
& \le \frac{1}{N-1} \iint  w_N(x-y) |u(t,x)| \cdot \| a_y \sqrt{N-\cN} \Phi \| \cdot \| a_{y}a_x  \Phi \| \d x \d y \nn\\
& \le \frac{2\|u(t,\cdot)\|_{L^\infty}}{N-1} \left( \iint w_N(x-y) \|a_x a_y \Phi\|^2 \d x \d y\right)^{1/2} \nn \\
&\qquad\qquad\qquad \times \left( \iint w_N(x-y) \| a_y \sqrt{N-\cN} \Phi \|^2 \d x \d y \right)^{1/2} \nn\\
&\le  \frac{C}{(1+t)^{3/2}} \langle \Phi, R_4 \Phi \rangle^{1/2} \langle \Phi, \cN \Phi \rangle^{1/2} \nn\\
&\le \eta \langle \Phi, R_4 \Phi \rangle + \frac{C}{\eta (1+t)^3} \langle \Phi, \cN \Phi \rangle, \quad \forall \eta>0.
\end{align}
Thus we conclude that  
\bq \label{eq:R3-final}
\pm (R_3+R_3^*)  \le \eta R_4 + \frac{C\cN}{\eta(1+t)^3}   , \quad \forall \eta>0.
\eq

Collecting  \eqref{eq:R0-final}, \eqref{eq:R1-final}, \eqref{eq:R2-final} and \eqref{eq:R3-final} gives us the first bound in Lemma \ref{lem:Rj}.

\medskip

\noindent
$\boxed{j=4}$ The simple estimate $0\le R_4\le N^{3\beta-1}\cN^2$ follows from the uniform bound $0\le w_N\le CN^{3\beta}$. Moreover, by Sobolev's inequality, we have
$$ w_N(x-y) \le C\|w_N\|_{L^{3/2}} (-\Delta_x) \le CN^{\beta } (-\Delta_x-\Delta_y)$$
as quadratic form on $\gH^2$ (see  e.g. \cite[Lemma 3.2]{NamRouSei-15} for a proof). Therefore,
$$
R_4 \le CN^{\beta-1} \iint (-\Delta_x -\Delta_y) a_x^* a_y^* a_x a_y \d x \d y \le CN^{\beta-1} \dGamma(-\Delta) \cN.
$$
This completes the proof of Lemma \ref{lem:Rj}.  
\end{proof}

Heuristically, the first estimate in Lemma \ref{lem:Rj} tells us that $R_4$ is the main error term among all $R_j$'s. The simple bound $0\le R_4\le N^{3\beta-1}\cN^2$ can serve as a-priori estimate, but it is not sufficient when $\beta>1/3$. On the other hand, in order to use the bound $R_4\le CN^{\beta-1} \dGamma(-\Delta) \cN$ we need to control the kinetic energy $\langle \Phi_N(t), \dGamma(1-\Delta) \Phi_N(t)\rangle$. We have

\begin{lemma} \label{lem:HN-kinetic} Under the assumptions in Theorem \ref{thm:main}, we have  
$$
\big \langle \Phi_N(t), \dGamma(1-\Delta)  \Phi_N(t) \big\rangle \le  C_\eps  N^{\beta+\eps} , \quad \forall t>0, \,\forall \eps\in (0,1-2\beta].
$$
\end{lemma}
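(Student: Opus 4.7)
The plan is to run a Grönwall argument in close analogy with Lemma \ref{lem:bH-kinetic}, but for the energy quantity $\langle \Phi_N(t), (\widetilde H_N(t)+\cN)\Phi_N(t)\rangle$. The starting point is a coercivity estimate on $\cF_+^{\le N}(t)$: from Lemma \ref{lem:bHt-dbHt} one has $\bH(t)\ge (1-\eta)\dGamma(-\Delta) - \eta\cN - C_\eps(\cN+N^{\beta+\eps})/(\eta(1+t)^3)$, and Lemma \ref{lem:Rj} together with $R_4\ge 0$ and $\cN^2/N\le \cN$ absorbs the remaining $R_j$'s with small constants. For $\eta$ small enough this yields
\begin{align*}
\widetilde H_N(t) \,\ge\, \tfrac{1}{2}\dGamma(-\Delta) + \tfrac{1}{2} R_4 - C(\cN + N^{\beta+\eps})
\end{align*}
and, dropping the nonnegative $R_4$, the coercivity
\begin{align*}
\dGamma(1-\Delta) \,\le\, C\big[\widetilde H_N(t) + \cN + N^{\beta+\eps}\big].
\end{align*}
So it suffices to prove $\langle \widetilde H_N(t)\rangle + \langle \cN\rangle\le C_\eps N^{\beta+\eps}$ uniformly in $t$.

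For the initial data, the analogous upper bound $\widetilde H_N(0)\le C[\dGamma(1-\Delta) + R_4 + \cN + N^{\beta+\eps}]$ reduces everything to bounding $\langle \Phi(0), R_4(0) \Phi(0)\rangle$, since the kinetic and $\cN$ terms are controlled by the assumptions of Theorem \ref{thm:main} on $\Phi(0)$. Here the restriction $\eps\le 1-2\beta$ enters: from $R_4\le CN^{3\beta-1}\cN^2$ (Lemma \ref{lem:Rj}) and the quasi-free structure of $\Phi(0)$ (which via the operator inequality $\alpha\alpha^*\le \gamma(1+\gamma)$ and Wick's theorem gives $\langle \cN^2\rangle_{\Phi(0)}\le C(\langle \cN\rangle^2_{\Phi(0)} + \langle \cN\rangle_{\Phi(0)})\le C_\eps N^{2\eps}$), one obtains $\langle R_4(0)\rangle_{\Phi(0)} \le C_\eps N^{3\beta-1+2\eps}$, which is $\le C_\eps N^{\beta+\eps}$ exactly when $\eps\le 1-2\beta$. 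The projection $\1^{\le N}$ does not increase $\langle \cN^k\rangle$, so this bound passes to $\Phi_N(0)$.

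To close the loop, I would differentiate in time using \eqref{eq:eq-PhiNt}, which produces $\langle \partial_t \widetilde H_N(t)\rangle$ and $\langle i[\widetilde H_N(t),\cN]\rangle$. The first is handled via Lemma \ref{lem:bHt-dbHt} for $\partial_t \bH(t)$ together with bounds of the same form as Lemma \ref{lem:Rj} for each $\partial_t R_j$, since differentiating $u(t), Q(t), \widetilde K_1(t), K_2(t)$ preserves the $(1+t)^{-3/2}$ pointwise decay (by Lemma \ref{lem:Hartree-equation} and Lemma \ref{lem:Sobolev-inverse-K2}). For the commutator, $R_0$ and $R_4$ preserve $\cN$ and drop out, while $[R_j,\cN]$ for $j=1,2,3$ is (up to a constant) of the same algebraic form as $R_j$ itself and hence obeys Lemma \ref{lem:Rj}. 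Choosing $\eta=(1+t)^{-3/2}$ everywhere and using the coercivity from paragraph one, one arrives at
\begin{align*}
\partial_t\big[\langle \widetilde H_N(t)\rangle + \langle \cN\rangle\big] \,\le\, \frac{C_\eps}{(1+t)^{3/2}}\big[\langle \widetilde H_N(t)\rangle + \langle \cN\rangle + N^{\beta+\eps}\big].
\end{align*}
Since $(1+t)^{-3/2}$ is integrable on $(0,\infty)$, Grönwall gives the uniform bound, and re-applying coercivity concludes the proof. The main obstacle is the uniform-in-time control of the time derivatives $\partial_t R_j$ and the commutators $[R_j,\cN]$ in the spirit of Lemma \ref{lem:Rj}; this is tedious but structurally identical to the estimates already carried out for $R_j$ itself.
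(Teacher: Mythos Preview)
Your proposal is correct and follows essentially the same route as the paper. The paper carries out precisely the ``tedious but structurally identical'' estimates you flag at the end, packaging them as Lemma~\ref{lem:dt-Rj} (bounds on $\partial_t(R_j+R_j^*)$ and $i[(R_j+R_j^*),\cN]$ of the same shape as Lemma~\ref{lem:Rj}); the only cosmetic difference is that the paper tracks the quantity $\langle\Phi_N(t),(\dGamma(1-\Delta)+R_4)\Phi_N(t)\rangle$ through two coupled integral inequalities (one for the kinetic term, one for $\cN$) and then substitutes, whereas you run Gr\"onwall directly on $\langle\widetilde H_N(t)+\cN\rangle$.
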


The proof of Lemma \ref{lem:HN-kinetic} is similar to that of Lemma \ref{lem:bH-kinetic}. We will need 

\begin{lemma} \label{lem:dt-Rj} We have the quadratic form estimates on $\cF_+^{\le N}$: 
\begin{align*}
\pm \partial_t (R_j+R_j^*) \le \eta \Big( R_4 + \frac{\cN^2}{N} \Big)+ \frac{C (1+\cN)}{\eta(1+t)^3}, \\
\pm i[(R_j+R_j^*),\cN] \le \eta \Big( R_4 + \frac{\cN^2}{N} \Big)+ \frac{C (1+\cN)}{\eta(1+t)^3},
\end{align*}
for all $j=0,1,2,3,4$ and $\eta>0$. The constant $C$ depends only on $\|u(0,\cdot)\|_{W^{1,\ell}}$.
\end{lemma}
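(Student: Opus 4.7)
The plan is to mirror the term-by-term structure of Lemma \ref{lem:Rj}, handling the commutator estimate and the time-derivative estimate by separate mechanisms.

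The commutator bound is essentially algebraic. Using the CCR identities $[a(f),\cN]=a(f)$ and $[a^*(f),\cN]=-a^*(f)$, one checks that $R_0$ and $R_4$ commute with $\cN$ (both have balanced creation/annihilation counts, and the $\cN$-dependent factor in $R_0$ is manifestly diagonal), while for $j=1,2,3$ the operator $R_j$ is an eigenoperator of $[\,\cdot\,,\cN]$: $[R_1,\cN]=R_1$, $[R_2,\cN]=-2R_2$, $[R_3,\cN]=R_3$, with $[R_j^*,\cN]=-([R_j,\cN])^*$. Thus $i[R_j+R_j^*,\cN]$ is, up to a real factor of modulus at most two, the Hermitian operator $i(R_j-R_j^*)$, whose expectation equals $-2\,\Im\langle\Phi,R_j\Phi\rangle$. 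The Cauchy-Schwarz bounds on $|\langle\Phi,R_j\Phi\rangle|$ already established in \eqref{eq:R1-final-0}, \eqref{eq:R2-final-0}, \eqref{eq:R3-final-0} then immediately deliver the claimed estimate.

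For $\pm\partial_t(R_j+R_j^*)$, I will apply the product rule and redo the arguments of Lemma \ref{lem:Rj} with the various $t$-dependent factors replaced by their derivatives. The required inputs, all from Lemma \ref{lem:Hartree-equation}, are $\|\partial_t u(t)\|_{L^\infty}\le C(1+t)^{-3/2}$ and $\|\partial_t u(t)\|_{L^2}\le C$, together with $\partial_t Q(t)=-|\partial_t u\rangle\langle u|-|u\rangle\langle\partial_t u|$, which is self-adjoint with operator norm $\le C$. The derivative analogues of the auxiliary norm bounds \eqref{eq:K1-norm-1}--\eqref{eq:K1-norm} and \eqref{eq:norm-K2} have already been recorded as \eqref{eq:der-K1-1}--\eqref{eq:der-K1} and \eqref{eq:norm-dt-K2}. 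With these substitutions, the Cauchy-Schwarz and operator-norm arguments used for $R_0$, $R_1$, $R_3$ go through essentially unchanged: each differentiated factor retains the same $(1+t)^{-3/2}$ decay as the original in the relevant norm, preserving the final $(1+t)^{-3}$ rate on the right-hand side. For $\partial_t R_4$, the four polarized pieces of $\partial_t(Q\otimes Q\, w_N\, Q\otimes Q)$ each contain one $\partial_t Q$; the operator-algebraic identity $(A-B)w_N(A-B)\ge 0$ with $A=\partial_t Q\otimes Q$, $B=Q\otimes Q$ and $w_N\ge 0$ yields $\pm(P+P^*)\le \eta R_4 + \eta^{-1}\widetilde R_4$, and the square term $\widetilde R_4$ (with $\partial_t Q$ on both sides) is bounded via the rank-two structure of $\partial_t Q$ and standard $R_3$-type pairings, giving a contribution of the required form.

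The one term requiring genuine care is $\partial_t R_2$, because differentiating $K_2=Q\otimes Q\,\widetilde K_2$ produces two kinds of pieces. The piece $Q\otimes Q\,\partial_t\widetilde K_2$ retains the full $Q\otimes Q$ prefactor, so the $a(u)\Phi=0$ trick from the $R_2$-argument still removes the projection, reducing the task to the weighted Cauchy-Schwarz pairing against $R_4$, now using $\iint |\partial_t u(x)|^2 w_N(x-y)|u(y)|^2\d x\d y\le C(1+t)^{-3}$ in place of the corresponding bound for $u$. The remaining pieces $\partial_t Q\otimes Q\,\widetilde K_2$ and $Q\otimes \partial_t Q\,\widetilde K_2$ lack the prefactor and the trick fails; but this is precisely the situation that \eqref{eq:L2-dtQ-K2} is designed for. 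A direct Cauchy-Schwarz using $\|\partial_t Q\otimes Q\,\widetilde K_2\|_{L^2}\le C(1+t)^{-3/2}$, together with the identity $\iint\|a_xa_y\Phi\|^2\d x\d y=\langle\Phi,\cN(\cN-1)\Phi\rangle$ and the estimate \eqref{eq:sqrt-N-N-1} on the $\sqrt{(N-\cN)(N-\cN-1)}/(N-1)-1$ factor, delivers a contribution of the form $\eta\cN^2/N+C\eta^{-1}(1+\cN)/(1+t)^3$. This $\partial_t R_2$-piece is the only genuine obstacle; every other term is a line-by-line translation of the corresponding argument in Lemma \ref{lem:Rj}.
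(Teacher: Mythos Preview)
Your commutator argument and the treatment of $\partial_t R_0$, $\partial_t R_1$, $\partial_t R_2$ match the paper's proof essentially line by line, including the correct identification of \eqref{eq:L2-dtQ-K2} as the key input for the $\partial_t Q\otimes Q\,\widetilde K_2$ pieces of $\partial_t R_2$.

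There is, however, a genuine gap in your treatment of $\partial_t R_3$. Your claim that the $R_3$ argument ``goes through essentially unchanged'' overlooks that the estimate \eqref{eq:R3-final-0} relies on the $Q$-dropping trick (replacing each $Q(t)$ by the identity via $a(u)\Phi=0$) to reduce the expression to a clean double integral $\iint w_N(x-y)\overline{u(t,x)}\langle\Phi,\sqrt{N-\cN}\,a_y^*a_ya_x\Phi\rangle\,\d x\,\d y$. When you differentiate one of the three $Q$'s in $1\otimes Q\,w_N\,Q\otimes Q$, the resulting $\partial_t Q$ cannot be dropped, and the expression acquires an extra integration variable through the kernel $(\partial_t Q)(z;z')$. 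This is precisely the obstruction you correctly flag for $j=2$ and $j=4$; it bites equally for $j=3$, and ``each differentiated factor retains the same decay in the relevant norm'' does not cover it because the $Q$'s never entered the original estimate through a norm at all. The paper handles these three sub-terms of $\partial_t R_3$ explicitly via the pointwise kernel bound $|(\partial_t Q)(z;z')|\le q(z)q(z')$ with $q=|u|+|\partial_t u|$, followed by tailored Cauchy--Schwarz estimates; one sub-term pairs against $R_4$ and the other two against $\cN^2/N$.

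For $\partial_t R_4$ your operator Cauchy--Schwarz route (reducing to $\eta R_4+\eta^{-1}\widetilde R_4$ and exploiting the rank-two structure of $\partial_t Q$ on $\widetilde R_4$) is a legitimate alternative, and it does give the required bound once one observes that on $\cF_+^{\le N}(t)$ the square term collapses to $\frac{1}{2(N-1)}a^*(\partial_t u)\,\dGamma(|u|^2*w_N)\,a(\partial_t u)\le \frac{C\cN^2}{N(1+t)^3}\le\frac{C\cN}{(1+t)^3}$. The paper instead applies the same pointwise kernel bound on $(\partial_t Q)(z;z')$ directly to each of the four polarized pieces and pairs against $R_4^{1/2}(\cN^2/N)^{1/2}$; both routes reach \eqref{eq:dt-R4-final}. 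Your approach is more structural, while the paper's is more explicit and has the advantage of reusing the $j=3$ machinery verbatim --- which is another reason to work out $j=3$ properly first.
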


\begin{proof} First, we bound $i[(R_j+R_j^*),\cN]$. If $j=0$ or $j=4$, the commutator is $0$. Moreover, we have
$$ i[R_1,\cN]=iR_1,\quad i[R_2,\cN]=-2iR_2, \quad i[R_3,\cN]=iR_3$$
because $[a_x,\cN]=a_x$, $[a_x^*a_y^*,\cN]=-2 a_x^* a_y^*$ and $[a_x^*a_y a_z, \cN]=a^*_x a_y a_z$, respectively. Thus the desired inequalities can be obtained in the same way as in Lemma \ref{lem:Rj} (more precisely, they follow from \eqref{eq:R1-final-0}, \eqref{eq:R2-final-0} and \eqref{eq:R3-final-0}). 

Next, we bound $\partial_t (R_j+R_j^*)$ by proceeding as in the proof of Lemma \ref{lem:Rj}. Let us explain term by term. 

\medskip

\noindent
$\boxed{j=0}$ From \eqref{eq:K1-norm-1}-\eqref{eq:K1-norm-2}-\eqref{eq:K1-norm} and  \eqref{eq:der-K1-1}-\eqref{eq:der-K1-2}-\eqref{eq:der-K1}, we find that
$$ \Big\| \partial_t \Big( Q(t)[w_N*|u(t)|^2+\widetilde{K}_1(t) -\mu_N(t)]Q(t) \Big) \Big \| \le  \frac{C}{(1+t)^3}.$$
Therefore, similarly to \eqref{eq:R0-final}, we have 
\begin{align}\label{eq:dt-R0-final}
\pm \partial_t R_0 &= \pm \d\Gamma\Big(\partial_t \Big( Q(t)[w_N*|u(t)|^2+\widetilde{K}_1(t)-\mu_N(t)]Q(t) \Big) \Big)\frac{1-\cN}{N-1}  \\
& \le \frac{C\cN^2}{N(1+t)^{3}} \le \eta \frac{\cN^2}{N} + \eta^{-1} \frac{C\cN^2}{N(1+t)^6} \le \eta \frac{\cN^2}{N} + \eta^{-1} \frac{C\cN}{(1+t)^6}, \quad \forall \eta>0 .\nn
\end{align}

\noindent 
$\boxed{j=1}$ Using $\|Q(t)\|\le 1$, $\|\partial_t Q(t)\| \le C$, \eqref{eq:K1-norm-1} and \eqref{eq:der-K1-1}, we have
\begin{align*}
&\Big\|\partial_t \Big( Q(t)[w_N*|u(t)|^2]u(t) \Big)\Big\|_{L^2}  \\
&\le C \| [w_N*|u(t)|^2] u(t)\|_{L^2} +  \| [\partial_t[ (w_N* |u(t)|^2)  u(t)]\|_{L^2} \le \frac{C}{(1+t)^{3}}.
\end{align*}
Therefore, we can follow the proof of \eqref{eq:R1-final} and obtain
\bq \label{eq:dt-R1-final}
\pm \partial_t (R_1 + R_1^*) \le  \eta \frac{\cN^2}{N} + \eta^{-1} \frac{C\cN}{(1+t)^3}, \quad \forall \eta>0.
\eq

\noindent
$\boxed{j=2}$ For every $\Phi\in \cF_+^{\le N}(t)$, we have
\begin{align*}
&\langle \Phi, \partial_t R_2  \Phi \rangle =\iint \Big[ (\partial_t Q(t) \otimes 1 + 1\otimes \partial_t Q(t)) \widetilde K_2(t,x,y) + \partial_t \widetilde K_2(t,x,y)\Big] \times \\
&\qquad \qquad \qquad \times   \Big \langle \Phi, a^*_x a^*_y \Big(\frac{\sqrt{(N-\cN)(N-\cN-1)}}{N-1}-1\Big) \Phi \Big\rangle \d  x \d y.
\end{align*}
Here we have used the decomposition
$$
\partial_t K_2(t)= \partial_t Q(t) \otimes Q(t) \widetilde K_2(t) + Q(t) \otimes \partial_t Q(t)  \widetilde K_2(t) + Q(t)\otimes Q(t) \partial_t \widetilde K_2(t)
$$
and omitted the projection $Q(t)$ using $\Phi \in \cF_+(t)$. Similarly to \eqref{eq:R2-final-0}, we have   
\begin{align*} 
& \left|  \iint   \partial_t \widetilde K_2(t,x,y) \Big \langle \Phi, a^*_x a^*_y  \Big (\frac{\sqrt{(N-\cN)(N-\cN-1)}}{N-1}-1 \Big) \Phi \Big\rangle \d  x \d y \right|  \nn\\
&\le  \eta \langle \Phi, R_4 \Phi\rangle + \frac{C}{\eta (1+t)^3} \langle \Phi, (1+\cN) \Phi\rangle, \quad \forall \eta>0.
\end{align*}
The term involving $(\partial_t Q(t) \otimes 1 + 1 \otimes \partial_t Q(t)) \widetilde K_2(t,x,y)$ is bounded as
\begin{align*}
&\left| \iint \big(\partial_t Q(t) \otimes 1+1 \otimes \partial_t Q(t) \big) \widetilde K_2(t,x,y)  \times \right. \\
&\qquad \qquad \times \left.   \Big \langle \Phi, a^*_x a^*_y \Big(\frac{\sqrt{(N-\cN)(N-\cN-1)}}{N-1}-1\Big) \Phi \Big\rangle \d  x \d y \right| \\
&\le \left( \iint \Big| \big(\partial_t Q(t) \otimes 1 +1 \otimes \partial_t Q(t) \big) \widetilde K_2(t,x,y) \Big|^2 \d x \d y \right)^{1/2}\times  \nn\\
& \qquad\qquad \times \left( \iint \| a_x a_y \Phi\|^2 \d x \d y \right)^{1/2} \left\| \left(\frac{\sqrt{(N-\cN)(N-\cN-1)}}{N-1}-1\right) \Phi  \right\| \nn \\
&\le \frac{C}{\sqrt{N}(1+t)^{3/2}} \langle \Phi, \cN^2 \Phi \rangle^{1/2}\langle \Phi, (1+\cN)\Phi \rangle^{1/2}.
\end{align*}
Here we have used \eqref{eq:L2-dtQ-K2} and \eqref{eq:sqrt-N-N-1} in the last estimate. In summary, 
$$
\left| \langle \Phi, \partial_t R_2  \Phi \rangle  \right| \le  \eta \Big( \langle \Phi, R_4 \Phi\rangle + \frac{1}{N}\langle \Phi, \cN^2 \Phi\rangle\Big)  + \frac{C}{\eta (1+t)^3} \langle \Phi, (1+\cN) \Phi\rangle
$$
for all $\Phi\in \cF_+^{\le N}(t)$ and $\eta>0$. Therefore,
\bq \label{eq:dt-R2-final}
\pm \partial_t (R_2 + R_2^*) \le \eta \Big( R_4 +\frac{\cN^2}{N} \Big)+ \frac{C(1+\cN)}{\eta(1+t)^{3}}, \quad \forall \eta>0.
\eq
\medskip

\noindent
$\boxed{j=3}$ For all $\Phi\in \cF_+^{\le N}(t)$, we have 
\begin{align*}
\langle \Phi, \partial_t R_3 \Phi\rangle & = \frac{1}{N-1}  \iiiint \Big[   \big( 1 \otimes Q(t) w_N Q(t)\otimes Q(t)\big)(x,y;x',y') \overline{\partial_t u(t,x)} \\
&\qquad \qquad  +  \Big( \partial_t \big( 1 \otimes Q(t) w_N Q(t)\otimes Q(t)\big)\Big)(x,y;x',y') \overline{ u(t,x)} \Big] \times \\
& \qquad \qquad \qquad \qquad\qquad  \times \langle \Phi, \sqrt{N-\cN} a^*_y a_{x'} a_{y'}  \Phi \rangle \,\d x \d y \d x' \d y'.
\end{align*}
The term involving $\partial_t u(t,x)$ can  be estimated similarly to  \eqref{eq:R3-final-0}:
\begin{align*}
&\frac{1}{N-1} \left| \iiiint  ( 1 \otimes Q(t) w_N Q(t)\otimes Q(t))(x,y;x',y') \overline{\partial_t  u(t,x)}  \times \right. \\
& \qquad \qquad \qquad \qquad \left. \times  \langle \Phi, \sqrt{N-\cN} a^*_y a_{x'} a_{y'}  \Phi \rangle \,\d x \d y \d x' \d y' \right| \\
&\le \eta \langle \Phi, R_4 \Phi \rangle + \frac{C}{\eta (1+t)^3} \langle \Phi, \cN \Phi \rangle, \quad \forall \eta>0.
\end{align*}

In the following, we will use the kernel estimate
\bq \label{eq:kernel-dtQ}
|(\partial_t Q(t))(z;z')|= |\partial_t u(t,z) \overline{u(t,z')}+ u(t,z) \overline{\partial_t u(t,z')}| \le q(z) q(z')
\eq
where $q(t,z):=|u(t,z)| + |\partial_t u(t,z)|.$ Recall that by Lemma \ref{lem:Hartree-equation}, 
$$
\|q(t, \cdot)\|_{L^2} \le C, \quad \|q(t,\cdot)\|_{L^\infty}  \le \frac{C}{(1+t)^{3/2}}.
$$

Let us decompose  $\partial_t \big( 1 \otimes Q(t) w_N Q(t)\otimes Q(t)\big)$ into three terms. For the first term $1 \otimes \partial_t Q(t) w_N Q(t)\otimes Q(t)$, we can estimate 
\begin{align*}
&\frac{1}{N-1} \left| \iiiint  (1 \otimes \partial_t Q(t) w_N Q(t)\otimes Q(t)) (x,y;x',y')  \overline{u(t,x)}  \times \right.\\
& \qquad \qquad \qquad \qquad \qquad \qquad\left. \times \Big\langle \Phi, \sqrt{N-\cN} a^*_y a_{x'} a_{y'}  \Phi \Big\rangle \,\d x \d y \d x' \d y' \right|\\
&=\frac{1}{N-1}  \left|  \iiiint  (\partial_t Q(t))(y;y') w_N(x-y') \delta(x-x')  \overline{ u(t,x)} \times \right.\\
&\qquad \qquad \qquad \qquad \qquad \qquad  \left.\times \Big \langle \Phi,\sqrt{N-\cN} a^*_y a_x a_{y'}  \Phi \Big\rangle \,\d x \d y \d x' \d y' \right| \nn \\
&\le \frac{1}{N-1}  \iiint    q(t,y) q(t,y') w_N(x-y')  |u(t,x)| \times \\
&\qquad \qquad \qquad \qquad \qquad \qquad  \times \| a_y \sqrt{N-\cN} \Phi\| \|a_x a_{y'}  \Phi \|\,\d x \d y \d y' \nn \\
&\le \frac{\|q(t,\cdot)\|_{L^\infty}}{N-1} \left( \int |q(t,y)|^2 \d y\right)^{1/2} \left( \int \| a_y \sqrt{N-\cN} \Phi \|^2 \d y\right)^{1/2} \times \\
& \times \left( \iint w_N(x-y') |u(t,x)|^2 \d x \d y'\right)^{1/2} \left( \iint w_N(x-y') \| a_x a_{y'} \Phi\|^2 \d x \d y'\right)^{1/2} \\
&\le \frac{C}{(1+t)^{3/2}} \langle \Phi, R_4 \Phi \rangle^{1/2} \langle \Phi, \cN \Phi\rangle^{1/2} \le \eta \langle \Phi, R_4 \Phi \rangle + \frac{C\langle \Phi, \cN \Phi \rangle}{\eta (1+t)^3} , \quad \forall \eta>0.
\end{align*}
For the second term $1 \otimes  Q(t) w_N  \partial_t Q(t) \otimes Q(t)$, we have 
\begin{align*}
&\frac{1}{N-1} \left| \iiiint  (1 \otimes  Q(t) w_N  \partial_t Q(t) \otimes Q(t))(x,y;x',y')   \overline{u(t,x)}  \times \right. \\
& \qquad \qquad \qquad \qquad \qquad \qquad \left. \times \Big \langle \Phi, \sqrt{N-\cN} a^*_y a_{x'} a_{y'}  \Phi \Big\rangle \,\d x \d y \d x' \d y' \right|\\
&=  \frac{1}{N-1}  \left| \iiiint   w_N(x-y) (\partial_tQ(t))(x,x')  \delta(y-y') \overline{ u(t,x)} \times \right.\\
&\qquad \qquad \qquad \qquad \qquad \qquad \left.\times \Big \langle \Phi,\sqrt{N-\cN} a^*_y a_{x'} a_{y'}  \Phi \Big\rangle \,\d x \d y \d x' \d y' \right| \nn \\
&\le \frac{1}{N-1}   \iiint   w_N(x-y) q(t,x) q(t,x') |u(t,x)|  \times \\
&\qquad \qquad \qquad \qquad\qquad \qquad  \times \| a_y \sqrt{N-\cN} \Phi\| \|a_{x'} a_{y}  \Phi\| \,\d x \d y \d x' \nn \\
&\le \frac{1}{N-1} \|w_N\|_{L^1}  \| q(t,\cdot)\|_{L^\infty}  \|u(t,\cdot)\|_{L^\infty} \times \\
&\quad \times \left( \iint \|a_{x'} a_{y}  \Phi\|^2 \,\d x' \d y \right)^{1/2}  \left( \iint |q(t,x')|^2 \| a_y \sqrt{N-\cN} \Phi\|^2 \d x' \d y \right)^{1/2} \\
&\le \frac{C}{\sqrt{N}(1+t)^3}  \langle \Phi, \cN^2 \Phi\rangle^{1/2} \langle \Phi, \cN \Phi\rangle^{1/2}  \le \eta \frac{\langle \Phi, \cN^2 \Phi\rangle}{N} + \frac{C\langle \Phi, \cN \Phi\rangle}{\eta(1+t)^3}, \,\, \forall \eta>0.
\end{align*}
The third term $1 \otimes  Q(t) w_N   Q(t) \otimes \partial_t Q(t)$ is bounded similarly. Thus 
\begin{align*}
\left| \langle \Phi, \partial_t R_3 \Phi \rangle\right| \le \eta \left( \langle \Phi, R_4 \Phi \rangle + \frac{\langle \Phi, \cN^2 \Phi\rangle}{N} \right)  + \frac{C\langle \Phi, \cN \Phi\rangle}{\eta(1+t)^3}
\end{align*}
for all $\Phi\in \cF_+^{\le N}(t)$ and $\eta>0$. Consequently,
\bq \label{eq:dt-R3-final}
\pm  \partial_t (R_3+R_3^*) \le \eta \left( R_4 +\frac{\cN^2}{N} \right)+  \frac{C\cN}{\eta(1+t)^3}, \quad \forall \eta>0.
\eq

\noindent
$\boxed{j=4}$ For all $\Phi\in \cF_+^{\le N}(t)$, we have 
\begin{align*}
&\langle \Phi, \partial_t R_4\Phi\rangle = \frac{1}{2(N-1)} \Re \iiiint \partial_t \Big ( Q(t) \otimes Q(t) w_N Q(t)\otimes Q(t)\Big)(x,y;x',y') \\
& \qquad \qquad \qquad \qquad \qquad \qquad\qquad \qquad\times \langle \Phi, a_x^* a^*_y a_{x'} a_{y'}  \Phi \rangle \,\d x \d y \d x' \d y'.
\end{align*}
Let us decompose $\partial_t \big ( Q(t) \otimes Q(t) w_N Q(t)\otimes Q(t)\big)$ into four terms, and consider for example $\partial_t Q(t) \otimes Q(t) w_N Q(t)\otimes Q(t)$. Using \eqref{eq:kernel-dtQ} again, we have
\begin{align*}
&\frac{1}{N-1} \left|  \iiiint \Big(\partial_t Q(t) \otimes Q(t) w_N Q(t)\otimes Q(t)\Big)(x,y;x',y')\times \right.\\
& \qquad \qquad \qquad \qquad \qquad \qquad \qquad \qquad \left. \times \Big\langle \Phi, a_x^* a^*_y a_{x'} a_{y'}  \Phi \Big\rangle \,\d x \d y \d x' \d y' \right|\\
&=  \frac{1}{N-1} \left| \iiiint (\partial_t Q(t))(x,x') w_N(x'-y) \delta(y-y')\times \right. \\
& \qquad \qquad \qquad \qquad \qquad \qquad\qquad \qquad \left. \times \Big\langle \Phi, a_x^* a^*_y a_{x'} a_{y'}  \Phi \Big\rangle \,\d x \d y \d x' \d y' \right|\\
&\le \frac{1}{N-1} \iiint q(t,x) q(t,x') w_N(x'-y) \| a_x a_y \Phi\| \|a_{x'} a_{y} \Phi\| \,\d x \d y \d x'\\
&\le \frac{\|q(t,\cdot)\|_{L^\infty}}{N-1} \left( \iiint w_N(x'-y) \| a_x a_y \Phi\|^2 \d x \d y \d x'  \right)^{1/2} \times \\
&\qquad \qquad \qquad \qquad  \times \left( \iiint w_N(x'-y) \| a_{x'} a_y \Phi\|^2 |q(t,x)|^2 \d x \d y \d x'  \right)^{1/2} \\
&\le \frac{C}{N^{1/2}(1+t)^{3/2}}   \langle \Phi, \cN^2 \Phi\rangle^{1/2} \langle \Phi, R_4 \Phi\rangle^{1/2} \\
& \le \eta \langle \Phi, R_4 \Phi\rangle + \frac{C\langle \Phi, \cN^2 \Phi\rangle}{\eta N (1+t)^{3} } \le \eta \langle \Phi, R_4 \Phi\rangle + \frac{C\langle \Phi, \cN \Phi\rangle}{\eta (1+t)^{3} }, \quad \forall \eta>0.
\end{align*}
By similar estimates, we find that
\begin{align*}
\left| \langle \Phi, \partial_t R_4\Phi \rangle \right| \le \eta \langle \Phi, R_4 \Phi\rangle + \frac{C\langle \Phi, \cN \Phi\rangle}{\eta (1+t)^{3} }
\end{align*}
for all $\Phi\in \cF_+^{\le N}(t)$ and $\eta>0$. Thus 
\bq \label{eq:dt-R4-final}
\pm \partial_t R_4 \le \eta R_4 + \frac{C\cN}{\eta(1+t)^3}, \quad \forall \eta>0.
\eq
This completes the proof. 
\end{proof}

Now we are ready to provide
\begin{proof}[Proof of Lemma \ref{lem:HN-kinetic}] We use the proof strategy of Lemma \ref{lem:bH-kinetic}. Using the equation \eqref{eq:eq-PhiNt} we can write 
\begin{align} \label{eq:PhiN-kin-0}
\big\langle \Phi_N(t), \widetilde H_N(t) \Phi_N(t) \big\rangle &- \big\langle \Phi_N(0), \widetilde H_N(0) \Phi_N(0) \big\rangle \nn\\
&= \int_0^t   \big\langle \Phi_N(s), \partial_s \widetilde H_N(s) \Phi_N(s) \big\rangle \d s
\end{align}
and
\begin{align} \label{eq:PhiN-kin-a}
\big\langle \Phi_N(t), \cN  \Phi_N(t) \big\rangle - \big\langle \Phi_N(0), \cN \Phi_N(0) \big\rangle = \int_0^t   \big\langle \Phi_N(s), i[\widetilde H_N(s),\cN] \Phi_N(s) \big\rangle \d s.
\end{align}

Let us estimate both sides of \eqref{eq:PhiN-kin-0}. Recall that
$$
\widetilde H_N (t)=  \1^{\le N} \Big[ \bH(t) + \frac{1}{2}\sum_{j=0}^4 ( R_{j} + R_j^*) \Big] \1^{\le N}.
$$
From Lemma \ref{lem:bHt-dbHt} and Lemma \ref{lem:Rj}, we have the form estimates on $\cF_+^{\le N}(t)$:
\begin{align} \label{eq:HNt-upper-lower}
&\pm \1^{\le N} \Big( \widetilde H_N(t) + \dGamma(\Delta) -  R_4 \Big) \1^{\le N}  \nn\\
&= \pm \1^{\le N}  \Big( \bH(t) + \dGamma(\Delta) + \frac{1}{2}\sum_{j=0}^3 ( R_{j} + R_j^*)   \Big) \1^{\le N}  \nn\\
&\le \eta\Big(\dGamma(1-\Delta)+R_4+\frac{\cN^2}{N} \Big) +  \frac{C_\eps (N^{\beta+\eps} + \cN)}{\eta(1+t)^3}.
\end{align}
Similarly, from Lemma \ref{lem:bHt-dbHt} and Lemma \ref{lem:dt-Rj}, we have 
\begin{align} 
\pm \partial_t \widetilde H_N(t) &\le \eta\Big(\dGamma(1-\Delta)+R_4 + \frac{\cN^2}{N}\Big) +  \frac{C_\eps (N^{\beta+\eps} + \cN)}{\eta(1+t)^3},  \label{eq:dt-HNt-upper-lower}
\end{align}
for all $\eta>0$. 

Applying \eqref{eq:HNt-upper-lower} with $\eta=1/2$ and using $\cN^2/N \le \cN \le \dGamma(1-\Delta)$ on $\cF_+^{\le N}(t)$, we find that
\begin{align} \label{eq:PhiN-kin-1}
\big\langle \Phi_N(t), \widetilde H_N(t) \Phi_N(t) \big\rangle & \ge \frac{1}{2} \big\langle \Phi_N(t), (\dGamma(1-\Delta) + R_4) \Phi_N(t) \big\rangle \nn\\
&\qquad\qquad- C_\eps \Big(N^{\beta+\eps} + \big\langle \Phi_N(t), \cN  \Phi_N(t) \big\rangle \Big),\\
\label{eq:PhiN-kin-2-00}
\big\langle \Phi_N(0), \widetilde H_N(0) \Phi_N(0) \big\rangle  &\le C \big\langle \Phi_N(0), (\dGamma(1-\Delta) + R_4) \Phi_N(0) \big\rangle  + C_\eps N^{\beta+\eps}.
\end{align}
Using $\Phi_N(0)=\1^{\le N}\Phi(0)$,  we get
$$ \big\langle \Phi_N(0), \dGamma(1-\Delta)  \Phi_N(0) \big\rangle  \le \big\langle \Phi(0), \dGamma(1-\Delta)  \Phi(0) \big\rangle \le C_\eps N^{\beta+\eps}.  $$
On the other hand, recall that $R_4\le CN^{3\beta-1}\cN^2$ by Lemma \ref{lem:Rj}. Moreover, it is well-known that for every quasi-free state $\Phi$ and $s\in \mathbb{N}$, we have
\bq \label{eq:moment-quasi-free}
\Big\langle \Phi, (1+\cN)^{s} \Phi \Big\rangle \le C_{s} \Big\langle \Phi, (1+\cN) \Phi \Big\rangle^{s}
\eq
where the constant $C_s$ depends only on $s$ (see \cite[Lemma 5]{NamNap-15} for a proof). Combining with the assumptions $\big\langle \Phi(0), \cN \Phi(0)\big \rangle\le C_\eps N^\eps$ and $\eps \le 1-2\beta$, we obtain
\begin{align*}
\big\langle \Phi_N(0), R_4   \Phi_N(0) \big\rangle &\le CN^{3\beta-1} \big\langle \Phi(0), \cN^2 \Phi(0) \big\rangle \le CN^{3\beta-1} \big\langle \Phi(0), \cN \Phi(0)\big \rangle^2  \\
& \le C_\eps N^{3\beta-1} N^{2\eps} \le C_\eps N^{\beta+\eps}.
\end{align*}
 Thus \eqref{eq:PhiN-kin-2-00} reduces to   
\begin{align}  \label{eq:PhiN-kin-2}
\big\langle \Phi_N(0), \widetilde H_N(0) \Phi_N(0) \big\rangle \le C_\eps N^{\beta+\eps}.
\end{align}

Next, we apply \eqref{eq:dt-HNt-upper-lower} with $\eta=(1+t)^{-3/2}$ and use $\cN^2/N\le \cN \le \dGamma(1-\Delta)$ on $\cF_+^{\le N}(t)$. This gives
\begin{align} \label{eq:PhiN-kin-3}
\big\langle \Phi_N(t), \partial_t \widetilde H_N(t) \Phi_N(t) \big\rangle & \le C_\eps \frac{\big\langle \Phi_N(t), (\dGamma(1-\Delta) + R_4) \Phi_N(t) \big\rangle+N^{\beta+\eps}}{(1+t)^{3/2}}.
\end{align}
Inserting \eqref{eq:PhiN-kin-1}, \eqref{eq:PhiN-kin-2} and \eqref{eq:PhiN-kin-3} into \eqref{eq:PhiN-kin-0} we obtain
\begin{align} \label{eq:PhiN-kin-4}
\big\langle \Phi_N(t), (\dGamma(1-\Delta) + R_4) \Phi_N(t) \big\rangle &\le C_\eps \int_0^t  \frac{\big\langle \Phi_N(s), (\dGamma(1-\Delta) + R_4) \Phi_N(s) \big\rangle}{(1+s)^{3/2}} \d s \nn 
\\
&+ C_\eps \Big( N^{\beta+\eps}  +  \big\langle \Phi_N(t), \cN  \Phi_N(t) \big\rangle \Big).
\end{align}

Now we consider \eqref{eq:PhiN-kin-a}. By using $\Phi(0)=\1^{\le N}\Phi(0)$ and the assumption on $\Phi(0)$, we have
$$ \langle \Phi_N(0), \cN \Phi_N(0) \rangle \le \langle \Phi(0), \cN \Phi(0) \rangle \le C_\eps N^{\beta+\eps}.$$
Moreover, from Lemma \ref{lem:bHt-dbHt} and Lemma \ref{lem:dt-Rj}, we have
$$
\pm i [\widetilde H_N(t), \cN] \le \eta\Big(\dGamma(1-\Delta)+R_4 + \frac{\cN^2}{N}\Big) +  \frac{C_\eps (N^{\beta+\eps} + \cN)}{\eta(1+t)^3}, \quad \forall \eta>0.
$$
We can choose $\eta=(1+t)^{-3/2}$ and use $\cN^2/N \le \cN \le \dGamma(1-\Delta)$ on $\cF_+^{\le N}(t)$ to obtain
$$
\pm i [\widetilde H_N(t), \cN] \le C_\eps \frac{\dGamma (1-\Delta)+R_4 + N^{\beta+\eps}}{(1+t)^{3/2}}.
$$
Inserting the latter estimate into the right side of \eqref{eq:PhiN-kin-a}, we get
\bq \label{eq:PhiN-kin-b}
\langle \Phi(t), \cN \Phi(t) \rangle \le C_\eps \int_0^t  \frac{\big\langle \Phi_N(s), (\dGamma(1-\Delta) + R_4) \Phi_N(s) \big\rangle}{(1+s)^{3/2}} \d s + C_\eps N^{\beta+\eps}.
\eq

Finally, we substitute \eqref{eq:PhiN-kin-b} into the right side of \eqref{eq:PhiN-kin-4} and find that
\begin{align} \label{eq:PhiN-kin-6}
&\big\langle \Phi_N(t), (\dGamma(1-\Delta) + R_4) \Phi_N(t) \big\rangle \nn\\
&\le C_\eps \int_0^t  \frac{\big\langle \Phi_N(s), (\dGamma(1-\Delta) + R_4) \Phi_N(s) \big\rangle}{(1+s)^{3/2}} \d s + C_\eps N^{\beta+\eps}  .
\end{align}
This bound is similar to \eqref{eq:Phit-Grw-1} and we can argue as in the proof of Lemma \ref{lem:bH-kinetic} to conclude that 
$$
\big\langle \Phi_N(t), (\dGamma(1-\Delta) + R_4) \Phi_N(t) \big\rangle \le C_\eps N^{\beta+\eps}.
$$ 
Since $R_4\ge 0$, the desired kinetic estimate follows. 
\end{proof}

\section{Proof of the main theorem} \label{sec:main-proof}

\begin{proof}[Proof of Theorem \ref{thm:main}] It suffices to consider the case when $N$ is large and $\eps$ is small (indeed, if the desired bound holds for some $\eps>0$, then it also holds for any $\eps'>\eps$ because $(1+t)^{1+\eps} N^{(2\beta+\eps-1)/2} \le (1+t)^{1+\eps'} N^{(2\beta+\eps'-1)/2}$). In particular, we will assume $\eps<\min \{1/2,1-2\beta\}$. 

Since $U_N(t)$ is a unitary operator from $\gH^N$ to $\cF_+^{\le N}(t)\subset \cF(\gH)$, we have
\begin{align} \label{eq:final-proof-1}
\|  \Psi_N(t) - U_N(t)^* \1^{\le N} \Phi(t)\|_{\gH^N} &= \| U_N(t) \Psi_N(t) - \1^{\le N} \Phi(t)\| \\
&=  \| \1^{\le N}( \Phi_{N}(t) - \Phi(t) )\| \le \| \Phi_{N}(t)- \Phi(t)\|. \nn
\end{align} 
Using the equations \eqref{eq:Bogoliubov-equation} and \eqref{eq:eq-PhiNt}, we can compute 
\begin{align*} 
 \partial_t \| \Phi_{N}(t)-\Phi(t)\|^2 &= - 2\Re \, \partial_t \big\langle \Phi_{N}(t), \Phi(t) \big\rangle  \nn\\
 &=  - 2\Re \Big( \big\langle \partial_t \Phi_N(t), \Phi(t) \big\rangle +\big \langle  \Phi_N(t), \partial_t \Phi(t)\big \rangle \Big)  \nn \\
 & = - 2\Re \Big( \big\langle -i \widetilde H_N(t) \Phi_N(t), \Phi(t)\big \rangle +\big \langle  \Phi_N(t), -i \bH(t) \Phi(t)\big \rangle \Big)  \nn \\
 & = 2\Re \,\big \langle i \Phi_N(t),  (\widetilde H_N(t) - \bH(t))\Phi(t)\big\rangle .
 \end{align*}
Since 
$$
\widetilde H_N (t)=  \1^{\le N} \Big[ \bH(t) + \frac{1}{2}\sum_{j=0}^4 ( R_{j} + R_j^*) \Big] \1^{\le N}
$$
and $\Phi_N \in \cF_+^{\le N}(t)$, we can define $\1^{>N} := \1 - \1^{\le N}$ and write
\begin{align}  \label{eq:final-proof-2}
 \partial_t \| \Phi_{N}(t)-\Phi(t)\|^2 &=   \sum_{j=0}^4 \Re \big\langle i\Phi_N(t), (R_j+R_j^*) \1^{\le N} \Phi(t) \big\rangle\nn\\
 &\qquad \qquad \qquad - 2\Re\,\big\langle i\Phi_N(t), \bH \1^{>N} \Phi(t) \big\rangle.
\end{align}

\medskip

\noindent
{\bf Step 1.} Let us consider the last term of \eqref{eq:final-proof-2}. Since $\Phi_N(t) \in \cF_+^{\le N}(t)$ and $\1^{\le N}\dGamma(h) \1^{>N}=0$, we have 
$$
\big\langle \Phi_N(t), \bH \1^{>N} \Phi(t) \big\rangle = \big\langle \Phi_N(t), (\bH - \dGamma(h))  \1^{>N} \Phi(t) \big\rangle.
$$
Recall that by \eqref{eq:bound-paring-dG1},
\bq \label{eq:final-proof-simplebH}
\pm (\bH - \dGamma(h)) \le C( \cN + N^{3\beta}).
\eq
Here $C$ is a general constant depending only on $\|u(0,\cdot)\|_{W^{\ell,1}}$ (more precisely, on $\kappa_0$ in the condition $\| u(0,\cdot)\|_{W^{\ell,1}} \le \kappa_0$).

We will use the general fact that if $A$ and $B$ are quadratic forms satisfying  $\pm B\le A$, then for all vectors $f,g$ we have the Cauchy-Schwarz type inequality 
\begin{align} \label{eq:final-proof-CS}
|\langle f, B g\rangle| & \le | \langle f, (A+B) g \rangle | +  |\langle f, A g\rangle | \nn\\
&\le \langle f, (A+B) f \rangle^{1/2} \langle g, (A+B) g \rangle^{1/2} +  \langle f, A f\rangle^{1/2}\langle g, A g\rangle^{1/2} \nn\\
&\le 3 \langle f, A f\rangle^{1/2}\langle g, A g\rangle^{1/2}.
\end{align}
From \eqref{eq:final-proof-simplebH} and  \eqref{eq:final-proof-CS}, we obtain
\begin{align}  \label{eq:final-proof-bH>N}
&\left| \big\langle \Phi_N(t), \bH \1^{>N} \Phi(t) \big\rangle \right| = \left| \big\langle \Phi_N(t), (\bH - \dGamma(h))  \1^{>N} \Phi(t) \big\rangle \right| \nn \\
&\le C \Big\langle \Phi_N (t), ( \cN + N^{3\beta}) \Phi_N(t) \Big\rangle^{1/2} \Big\langle \1^{>N}\Phi (t), ( \cN + N^{3\beta}) \1^{>N}\Phi(t) \Big\rangle^{1/2} \nn\\
& \le C (N+N^{3\beta}) \Big\langle \Phi (t),  \cN^s N^{-s} \Phi(t) \Big\rangle^{1/2}, \quad \forall s \ge 1. 
\end{align}
Here, in the last inequality, we have used $\cN \le N$ on $\cF_+^{\le N}(t)$ and 
$$\1^{>N} (\cN + N^{3\beta}) \le (N+N^{3\beta})  \cN^{s}N^{-s}, \quad \forall s\ge 1.$$
Now we use the moment estimate \eqref{eq:moment-quasi-free}, the bound on $\langle \Phi(t), \cN \Phi(t)\rangle$ in Lemma \ref{lem:Bogoliubov-equation} and the assumption $\langle \Phi(0), \cN \Phi(0)\rangle \le C_\eps N^{\eps}$. All this gives
\begin{align} \label{eq:final-proof-moment}
\big\langle \Phi(t), (1+\cN)^{s} \Phi(t) \big\rangle \le C_s \big\langle \Phi(t), (1+\cN) \Phi(t) \big\rangle^{s} \le C_{\eps,s} N^{2s\eps} [\log(2+t)]^{2s}.
\end{align}
Hence, \eqref{eq:final-proof-bH>N} reduces to 
\begin{align*}
\left| \big\langle \Phi_N(t), \bH \1^{>N} \Phi(t) \big\rangle \right| \le C_{\eps,s} (N+N^{3\beta}) N^{s(\eps-1/2)}  [\log(2+t)]^{s}
\end{align*}
for all $s\ge 1$. Since $\eps-1/2<0$, we can choose $s=s(\eps)$ sufficiently large (e.g. $s \ge (2+3\beta)/(1/2-\eps)$) to obtain 
\begin{align} \label{eq:final-proof-bH>N-last}
\left| \big\langle \Phi_N(t), \bH \1^{>N} \Phi(t) \big\rangle \right| &\le C_{\eps} N^{-1} (1+t)^\eps. 
\end{align}
Here we have bound $[\log(2+t)]^{s_\eps}$ by $C_\eps (1+t)^\eps$ for simplicity.  
\medskip

\noindent{\bf Step 2.} Now we turn to the first term on the right side of \eqref{eq:final-proof-2}. Recall that by Lemma \ref{lem:Rj}, we have the quadratic form estimates on $\cF_+^{\le N}(t)$:
\bq \label{eq:final-proof-Rj}
\pm (R_j+R_j^*) \le 2(1+ \eta) R_4 + \eta \frac{\cN^2}{N} + \frac{C (1+\cN)}{\eta(1+t)^3}
\eq
for all $j=0,1,2,3,4$ and $\eta> 0$ (the bound for $j=4$ does not follow from Lemma \ref{lem:Rj} but it is trivial).

Since we do not have a good control on $\langle \Phi_N(t), \cN \Phi_N(t)\rangle$, we need to introduce a cut-off before applying \eqref{eq:final-proof-Rj}. Note that for every $4<M<N-2$, 
$$ \1^{\le M} (R_j+R_j^*)\1^{> M+2} =0 \quad\text{and}\quad  \1^{> M} (R_j+R_j^*)\1^{\le M-2} =0$$
because there are at most 2 creation or annihilation operators in the expressions of $R_j$'s.  Therefore, we can write
\begin{align*}
\big\langle  \Phi_N(t), (R_{j} +R_j^*)\1^{\le N} \Phi(t) \big\rangle &= \big\langle  \1^{\le M} \Phi_N(t), (R_{j}+R_j^*) \1^{\le M+2} \Phi(t) \big\rangle \\
&\quad + \big\langle  \1^{> M} \Phi_N(t), (R_{j} +R_j^*)  \1^{\le N}  \1^{> M-2}  \Phi(t) \big\rangle\nn
\end{align*}
and then apply \eqref{eq:final-proof-Rj} and \eqref{eq:final-proof-CS} to each term on the right side. This gives 
\begin{align}  \label{eq:final-proof-3}
\left| \big\langle  \Phi_N(t), (R_{j} +R_j^*)\1^{\le N} \Phi(t) \big\rangle \right| \le C(E_1 + E_2) 
\end{align}
where
\begin{align*}
E_1&= \inf_{\eta>0} \left\langle \1^{\le M} \Phi_N(t), \Big( (1+\eta) R_4 + \eta \frac{\cN^2}{N} + \frac{1+\cN}{\eta(1+t)^3} \Big) \1^{\le M} \Phi_N(t)  \right\rangle^{1/2}  \nn\\
& \qquad  \times \left\langle \1^{\le M+2} \Phi(t), \Big( (1+\eta) R_4 + \eta \frac{\cN^2}{N} + \frac{1+\cN}{\eta(1+t)^3} \Big) \1^{\le M+2} \Phi(t)  \right\rangle^{1/2}, \nn \\
E_2& = \inf_{\eta>0} \left\langle \1^{> M} \Phi_N(t), \Big( (1+\eta) R_4 + \eta \frac{\cN^2}{N} + \frac{1+\cN}{\eta(1+t)^3} \Big) \1^{> M} \Phi_N(t)  \right\rangle^{1/2} \\
&\qquad \times \left\langle \1^{> M-2} \Phi(t), \Big((1+\eta) R_4 + \eta \frac{\cN^2}{N} + \frac{1+\cN}{\eta(1+t)^3} \Big) \1^{> M-2} \Phi(t)  \right\rangle^{1/2} .
\end{align*}

To bound $E_1$, we use the last estimate in Lemma \ref{lem:Rj} and $\1^{\le M}\cN \le M$:
$$ \1^{\le M} R_4 \le C N^{\beta-1} \1^{\le M} \cN \dGamma(-\Delta) \le CN^{\beta-1}M \dGamma(-\Delta).$$ 
Moreover, recall that we have the kinetic estimate in Lemma \ref{lem:HN-kinetic}:
$$\langle \Phi_N(t), \dGamma(1-\Delta) \Phi_N(t) \rangle \le C_\eps N^{\beta+\eps}, \quad \forall \eps \in (0,1-2\beta],$$
where the constant $C_\eps$ is independent of $N$ and $t$. Therefore, 
\begin{align*}
&\left\langle \1^{\le M} \Phi_N(t), \Big( (1+\eta) R_4 + \eta \frac{\cN^2}{N} + \frac{1+\cN}{\eta(1+t)^3} \Big) \1^{\le M} \Phi_N(t)  \right\rangle \nn\\
& \le C_\eps \Big( (1+\eta) N^{\beta-1} M N^{\beta+\eps}  + \eta M^2 N^{-1}  + \frac{M}{\eta(1+t)^3} \Big) . 
\end{align*}  
Similarly, we have the same bound with  $\1^{\le M} \Phi_N(t)$ replaced by  $\1^{\le M+2} \Phi(t)$ (the kinetic estimate for $\Phi(t)$ is provided   in Lemma \ref{lem:bH-kinetic}). Then by optimizing over $\eta>0$ we find that 
\begin{align*} 
E_1\le C_\eps  \Big( M N^{(2\beta+\eps-1)/2} + M^{3/2} N^{-1/2} \Big) . 
\end{align*}  

Next, we bound $E_2$ using the argument in Step 1. To be precise, let us choose $\eta=1$ in the variational formula of $E_2$ for simplicity and then use $R_4 \le C N^{3\beta-1}\cN^2$ (see Lemma \ref{lem:Rj}). We obtain the rough bound
\begin{align*}
E_2 & \le C N^{3\beta} \left\langle \1^{> M} \Phi_N(t), (\cN +1)^2  \1^{> M} \Phi_N(t)  \right\rangle^{1/2} \\
&\qquad \qquad \qquad \times \left\langle \1^{> M-2} \Phi(t),  (\cN +1)^2 \1^{> M-2} \Phi(t)  \right\rangle^{1/2} .
\end{align*}
Now for the first term we use $\1^{\le N} (\cN +1 )\le N+1$ (recall that $\Phi_N(t) \in \cF_+^{\le N}(t)$) and get
\begin{align*} 
\left\langle \1^{> M} \Phi_N(t), (\cN +1)^2 \1^{> M} \Phi_N(t)  \right\rangle \le (N+1)^{2}.
\end{align*}
For the second term, we use $\1^{>M-2} (\cN+1)^2 \le (\cN+1)^{s} (M-1)^{2-s}$ with $s \ge 2$ and then use the moment estimate \eqref{eq:final-proof-moment}. We find that
\begin{align*} 
\left\langle \1^{> M-2} \Phi(t),  (\cN +1)^2 \1^{> M-2} \Phi(t)  \right\rangle  &\le (M-1)^{2-s} \left\langle \Phi(t),  (\cN +1)^s  \Phi(t)  \right\rangle \\
&\le C_{\eps,s} (M-1)^{2-s} N^{2s\eps} [\log(2+t)]^{2s}.
\end{align*}
All this yields   
$$
E_2 \le C_{\eps,s}N^{3\beta+1} M^{1-s/2} N^{s\eps} [\log(2+t)]^{s}.
$$

In summary, from \eqref{eq:final-proof-3} it follows that
\begin{align*} 
\left| \big\langle  \Phi_N(t), (R_{j} +R_j^*)\1^{\le N} \Phi(t) \big\rangle \right| &\le C_\eps  \Big( M N^{(2\beta+\eps-1)/2} + M^{3/2} N^{-1/2} \Big) \\
& + C_{\eps,s}N^{3\beta+1} M^{1-s/2}N^{s\eps} [\log(2+t)]^{s}
\end{align*}
for all $4<M<N-2$ and $s\ge 2$. We can choose $M=N^{3\eps}$ and $s=s(\eps)$ sufficiently large (e.g. $s\ge 6 (1+\beta+\eps)/\eps$) to obtain 
\begin{align} \label{eq:final-proof-2a}
\left| \big\langle  \Phi_N(t), (R_{j} +R_j^*)\1^{\le N} \Phi(t) \big\rangle \right| \le C_\eps \Big( N^{(2\beta+9\eps-1)/2} + N^{-1}(1+t)^{\eps} \Big) .
\end{align}

\medskip

\noindent 
{\bf Step 3.} From \eqref{eq:final-proof-2}, \eqref{eq:final-proof-bH>N-last} and \eqref{eq:final-proof-2a}, we find that 
$$
\partial_t \| \Phi_{N}(t)-\Phi(t)\|^2 \le C_\eps \Big( N^{(2\beta+9\eps-1)/2} + N^{-1} (1+t)^{\eps} \Big).
$$
Integrating over $t$ and using
$$
\| \Phi_N(0)-\Phi(0)\|^2 = \langle \Phi(0), \1^{>N} \Phi(0) \rangle \le N^{-1} \langle \Phi(0), \cN \Phi(0) \rangle \le C_\eps N^{\eps-1}.
$$
we obtain
\begin{align*}
\| \Phi_{N}(t)-\Phi(t)\|^2 &\le C_\eps N^{\eps-1} + C_\eps  \Big( t N^{(2\beta+9\eps-1)/2} + N^{-1} (1+t)^{1+\eps} \Big) \\
& \le C_\eps (1+t)^{1+\eps} N^{(2\beta+9\eps-1)/2}.
\end{align*}
Finally, from \eqref{eq:final-proof-1} we conclude that 
\begin{align} \label{eq:thm-quantitative-estimate}
\| \Psi_{N}(t)- U_N(t)^* \1^{\le N} \Phi(t)\|_{\gH^N}^2 &\le \| \Phi_{N}(t)-\Phi(t)\|^2 \nn \\
& \le C_\eps (1+t)^{1+\eps} N^{(2\beta+9\eps-1)/2}
\end{align}
for all $0<\eps<\min \{1/2,1-2\beta\}$. In the latter estimate we can thus replace $\eps$ by $\eps/9$ and obtain 
\begin{align*}
\| \Psi_{N}(t)- U_N(t)^* \1^{\le N} \Phi(t)\|_{\gH^N}^2 \le C_{\eps} (1+t)^{1+\eps} N^{(2\beta+\eps-1)/2}
\end{align*}
for all $0<\eps<\min \{1/2,1-2\beta\}$ (with the constant $C_\eps$ adjusted appropriately). As we have explained, this estimate holds for all $\eps>0$ because $(1+t)^{1+\eps} N^{(2\beta+\eps-1)/2}\le (1+t)^{1+\eps'} N^{(2\beta+\eps'-1)/2}$ when $\eps'\ge \eps$. This ends the proof of Theorem \ref{thm:main}.
\end{proof}

\end{document}